\title[k-mer methods for phylogenetic trees]{Statistically-Consistent k-mer Methods for Phylogenetic Tree Reconstruction}
\author{Elizabeth S. Allman}
\address{Department of Mathematics and Statistics\\
University of Alaska Fairbanks, 99775}
\email{e.allman@alaska.edu}
\author{John A. Rhodes}
\address{Department of Mathematics and Statistics\\
University of Alaska Fairbanks, 99775}
\email{j.rhodes@alaska.edu}
\author{Seth Sullivant}
\address{Department of Mathematics \\
North Carolina State University, Raleigh, NC, 27695}
\email{smsulli2@ncsu.edu}
\date{}
\theoremstyle{plain}
\newtheorem{thm}{Theorem}[section]
\newtheorem{prop}[thm]{Proposition}
\newtheorem{cor}[thm]{Corollary}
\theoremstyle{definition}
\theoremstyle{remark}
\numberwithin{figure}{section}
\newcommand{\diag}{\operatorname{diag}}
\newcommand{\tr}{\operatorname{tr}}
\newcommand{\Var}{\operatorname{Var}}
\newcommand{\Cov}{\operatorname{Cov}}
\newcommand{\EE}{\mathbb E}
\newcommand{\Prob}{\operatorname{Prob}}
\newcommand{\inD}[1][\relax]{\def\argone{#1}\def\temprelax{\relax}
  \ifx\argone\temprelax\right.\else\,\middle|#1\right.{}\fi}
\begin{document}
\maketitle

\begin{abstract}
  Frequencies of $k$-mers in sequences are sometimes used as a basis
  for inferring phylogenetic trees without first obtaining a multiple
  sequence alignment.  We show that a standard approach of using the
  squared-Euclidean distance between $k$-mer vectors to approximate a
  tree metric can be statistically inconsistent. To remedy this, we
  derive model-based distance corrections for orthologous sequences
  without gaps, which lead to consistent tree inference.  The
  identifiability of model parameters from $k$-mer frequencies is also
  studied.  Finally, we report simulations showing the corrected
  distance out-performs many other $k$-mer methods, even when
  sequences are generated with an insertion and deletion process.
  These results have implications for multiple sequence alignment as
  well, since $k$-mer methods are usually the first step in
  constructing a guide tree for such algorithms.
\end{abstract}


\section{Introduction}

The first step in most approaches to inference of a phylogenetic tree
from sequence data is to construct an alignment of the sequences,
intended to identify orthologous sites.  When many sequences are
considered at once, a full search over all possible sequence
alignments is infeasible, so most algorithms reduce the range of
possible alignments considered by constructing multiple alignments on
subcollections of the sequences and then merging these together, using
heuristic, rather than model-based, schemes.  Deciding which
subcollections of the sequences to align follows a \emph{guide tree},
a rough tree approximating the evolutionary histories of all the
sequences.  This means that sequence alignment and phylogenetic tree
construction are circularly entangled: finding a tree depends on
knowing a multiple sequence alignment, and obtaining a sequence
alignment requires knowing a tree.

To get around this ``chicken-and-egg'' problem of alignment and
phylogeny several methods have been proposed.  The most theoretically
appealing methods are simultaneous alignment and phylogeny algorithms,
built upon statistical models of insertion and deletions (indels) of
bases as well as base substitutions \citep{Thorne1991,Thorne1992}.
Unfortunately, such methods are computationally intensive, and do not
scale well for large phylogenies.  Alternatively, methods have been
developed that iteratively compute alignments and phylogenies many
times, using the output from one procedure as the input to the next
\citep{Liu2009, Liu2012}. These last investigations underscored that
poor alignments can be a significant source of error in trees, and
that better guide trees can lead to better tree inference.

If one is interested primarily in the phylogeny, an alternate strategy
is to develop methods for inferring trees that do not require having a
sequence alignment in hand.  Current fully alignment-free phylogenetic
methods were not developed with stochastic models of sequence
evolution in mind, and are not widely accepted in the phylogenetics
community.  However, the construction of initial guide-trees for
producing alignments generally follows an alignment-free approach.
For example MUSCLE \citep{Edgar2004a,Edgar2004b} uses $k$-mer
distances with UPGMA or Neighbor-Joining to produce guide trees,
whereas Clustal Omega \citep{Sievers2011} uses a low dimensional
geometric embedding based on $k$-mers \citep{Blackshields2010} and
$k$-means or UPGMA as a clustering algorithm. Thus even though tree
inference is typically performed with model-based statistical methods,
the initial step is built on heuristic ideas, with no evolutionary
model in use.

\smallskip

As exemplified by these alignment algorithms, most common
alignment-free methods are based on $k$-mers, contiguous subsequences
of length $k$.  To a sequence of length $n$ for any natural number
$k\le n$ we associate the vector of counts of its distinct $k$-mers.
For a DNA sequence the $k$-mer count vector has $4^k$ entries and sums
to $n-k+1$.  Distance between two sequences might be calculated by
measuring the (squared Euclidean) distance between their (suitably
normalized) $k$-mer count vectors.  In this way one obtains pairwise
distances between all sequences, and can apply a standard
distance-based method (e.g.~Neighbor joining) to construct a
phylogenetic tree.

Such $k$-mer methods are sometimes described as non-parametric, in
that they do not depend on any underlying statistical model describing
the generation of the sequences. For phylogenetic purposes, where an
evolutionary model will be assumed in later stages of an analysis, it
is hard to view this as desirable.  As we will show in Section
\ref{sec:JCcor}, if we do assume that data is produced according to a
standard probabilistic model of sequence evolution, then a naive
$k$-mer method is statistically inconsistent.  That is, over a rather
large range of metric trees, it will not recover the correct tree from
sequence data, even with arbitrarily long sequences.  The statistical
inconsistency of such a $k$-mer method is similar to the ones seen for
parsimony, in the ``Felsenstein zone'' \citep{Felsenstein1978}.

Our main result, presented in Section \ref{sec:formulas}, is the
derivation of a statistically consistent model-based $k$-mer distance
under standard phylogenetic models with no indel process.  It would,
of course, be preferable to work with a model including indels, as
only in that situation is an alignment-free method of real value. At
this time, however, we are only able to offer a reasonable heuristic
extension of our method for sequences evolving with a mild indel
process.  This appears in Section \ref{sec:practical}.  We view this
as only a first step towards developing rigorously-justified
model-based $k$-mer methods for indel models; solid theoretical
development of such methods is a project for the future.

Section \ref{sec:ident} presents more detailed results on
identifiability of model parameters from $k$-mer count vectors. While
one of these plays a role in establishing the results of Section
\ref{sec:formulas}, they are of interest in their own right. Technical
proofs for Sections \ref{sec:formulas} and \ref{sec:ident} are
deferred to the Appendices.

In Section \ref{sec:simulation} we report results from simulation
studies on sequence data generated from models with and without an
indel process, comparing $k$-mer methods with and without the
model-based corrections.  As expected, the $k$-mer methods with the
model-based corrections outperform both the uncorrected $k$-mer
methods and a more traditional distance method based on first
computing pairwise alignments of sequences.  The simulation studies
also illustrate the statistical consistency of the model-based
methods, and the inconsistency of the standard $k$-mer method.

\medskip

\subsection*{Comparison to Prior Work on Alignment-Free Phylogenetic Algorithms}

There have been a number of papers in recent years developing 
alignment-free methods for phylogenetic tree 
reconstruction \citep{Daskalakis2013,LifePrint, Yang2008,Chan2014}
or for clustering metagenomic data \citep{Reinert2009, Smale2014}.
Of these only one \citep{Daskalakis2013} appears to be based on common 
phylogenetic modeling assumptions, but its focus is theory rather than practice.
Others  \citep{Chan2014,Reinert2009} are model-based but the underlying 
model is not evolutionary in nature.
Some are primarily simulations studies of the application of a method
on larger trees than those we focus on here.  

In our simulations, we follow the framework suggested by
\citet{Huelsenbeck1995}, which allows us to graphically display
performance on an important slice of tree space for 4-taxon trees.
One then readily sees the effect on performance of varying branch
length, and the strength of the common ``long branch attraction''
phenomenon.  In comparison, the simulations in \citep{LifePrint,
  Yang2008,Chan2014} use trees that have more leaves but the range of
branch lengths explored is significantly reduced.  We believe
following Hulsenbeck's plan provides more fundamental insights into a
methodology's value.

\citet{Daskalakis2013} derived a statistically consistent
alignment-free method for a model with indels, although it appears to
have not yet been tested, even on simulated data.  Their method is
based on computing the base distribution (i.e., the 1-mer
distribution) in sub-blocks of the sequences, and motivated the
similar approach we take here.  In addition to restricting to 1-mers,
their approach requires \emph{a priori} knowledge of the value of
certain model parameters, e.g., the proportion of gaps in a sequence,
and several parameters defining the base substitution process.  As our
theoretical results involve no indel process and allow arbitrary $k$,
the two works are not directly comparable.  However, we are able to
obtain stronger results on the identifiability of parameters of the
base substitution model, and our simulations show that using $k>1$ can
result in improved performance.

For advancing data analysis, it is highly desirable to develop
theoretically-justified model-based $k$-mer methods that both account
for indels and require few assumptions on model parameters.  Neither
\citet{Daskalakis2013} nor we provide such methods; both of our works
represent first steps, in slightly different directions, but pointing
towards the same goal.

\medskip


\section{$k$-mer formulas for indel-free sequences}\label{sec:formulas}

In this section we present formulas for model-based corrections to
distances based on $k$-mer frequency counts. Technical proofs appear
in Appendix A.  Our main result, Theorem \ref{thm:gentrace}, is quite
general, applying to arbitrary pairwise distributions that are at
stationarity.  We use this result to derive corrected distance
calculations for the Juke-Cantor model and the Kimura 2- and
3-parameter models.  These corrections yield statistically consistent
estimates of evolutionary times between extant taxa.  Coupled with a
statistically consistent method for constructing a tree from distances
(for example, Neighbor Joining \citep{Saitou1987}), this produces a
statistically consistent method for reconstructing phylogenetic trees
from $k$-mer counts.

Let $S$ be a sequence on an $L$-letter alphabet, $[L] := \{1,2,
\ldots, L\}$.  For a natural number $k$, let $X$ denote the vector of
$k$-mer counts extracted from $S$.  That is, for each $W =
w_{1}w_{2}\ldots w_{k} \in [L]^{k}$ the coordinate $X^{W}$ records the
number of times that $W$ occurs as a contiguous substring in $S$.  A
standard $k$-mer method computes a distance between two sequences
$S_{1}$ and $S_{2}$ of lengths $n_{1}$ and $n_{2}$ by first computing
their respective $k$-mer vectors $X_{1}$ and $X_{2}$ and then
computing the squared-Euclidean distance
$$
||X_1-X_2||_2^2=\sum_{W \in [L]^{k}} \left(X^{W}_{1} - 
X^{W}_{2} \right)^{2}.
$$

Consider two sequences descended from a common ancestor while
undergoing a base-substitution process described by standard
phylogenetic modeling assumptions. More specifically, we may assume
one of the sequences, $S_1$, is ancestral to the other, $S_2$, and its
sites are assigned states in $[L]$ according to an i.i.d.~process with
state probability vector $\pi=(\pi^w)_{w\in L}$. Additionally, $\pi$
is the stationary distribution of an $L\times L$ Markov matrix $M$
describing the single-site state change process from sequence $S_1$ to
sequence $S_2$. For continuous-time models, with rate matrix $Q$ and
time (or branch length) $t$, one has $M = \exp(Qt)$.  The probability
of a $k$-mer $W = w_{1}w_{2}\ldots w_{k} \in [L]^{k}$ in any $k$
consecutive sites of either single sequence is then $\pi^{W}
=\prod_{j=1}^k \pi^{w_j}$. The $k$-mer vectors $X_{1}$ and $X_{2}$ are
random variables which summarize $S_1$ and $S_2$.

The following theorem relates the expectation of an appropriately
chosen norm of the difference of $k$-mer counts $X_1-X_2$ to the
base-substitution model.  Since the expectation can be estimated from
$k$-mer data, this means that from $k$-mer data we can infer
information on how much substitution has occurred.

\begin{thm} \label{thm:gentrace}
Let $S_{1}$ and $S_{2}$ be two sequences of length $n$
 generated from an indel-free Markov model
with transition matrix $M$ and stationary distribution $\pi$,
and let $X_{1}$ and $X_{2}$ be the resulting $k$-mer count vectors.
Then
\begin{equation}
\mathbb{E}\left[  \sum_{W \in [L]^{k}}  \frac1{\pi^W} (X_{1}^{W} - X_{2}^{W}) ^{2} \right]  = 2(n-k+1)(L^k-(\tr M)^k ).\label{eq:gentrace}
\end{equation}
\end{thm}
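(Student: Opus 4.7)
The plan is to decompose $X_a^W$ as a sum of position indicators, expand the weighted squared norm as a double sum over pairs of window start-positions, and evaluate each pairwise contribution separately. For $a\in\{1,2\}$ and $i\in\{1,\dots,n-k+1\}$, let $Y_i^W(a) := \mathbf{1}\bigl[(S_a[i],\ldots,S_a[i+k-1]) = W\bigr]$, so $X_a^W = \sum_i Y_i^W(a)$. Writing $D_i^W := Y_i^W(1) - Y_i^W(2)$,
\begin{equation*}
\sum_W \frac{(X_1^W - X_2^W)^2}{\pi^W} \;=\; \sum_{i,j} A(i,j), \qquad A(i,j) := \sum_W \frac{D_i^W D_j^W}{\pi^W}.
\end{equation*}
It suffices to show $\mathbb{E}[A(i,i)] = 2(L^k - (\tr M)^k)$ and $\mathbb{E}[A(i,j)] = 0$ when $i\neq j$, since there are $n-k+1$ diagonal pairs.

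The diagonal case is direct: $(Y_i^W(a))^2 = Y_i^W(a)$ has expectation $\pi^W$, and $\mathbb{E}[Y_i^W(1)Y_i^W(2)] = \prod_{s=1}^k \pi^{w_s} M_{w_s w_s}$, so $\mathbb{E}[A(i,i)] = 2L^k - 2\sum_W \prod_s M_{w_s w_s} = 2(L^k-(\tr M)^k)$ after the factorization $\sum_W \prod_s M_{w_s w_s} = \prod_s \sum_{w_s} M_{w_s w_s} = (\tr M)^k$. The far-apart case $|i-j|\geq k$ is also easy: the two windows involve disjoint sites of $S_1$, so all four expectations $\mathbb{E}[Y_i^W(a)Y_j^W(b)]$ factor as $(\pi^W)^2$, and the signs in $D_i^W D_j^W$ cancel.

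The crux is the overlap regime $0 < d := |j-i| < k$. Here I would compute each of the four sums $\sum_W \frac{1}{\pi^W}\mathbb{E}[Y_i^W(a)Y_j^W(b)]$ for $(a,b)\in\{1,2\}^2$ and show all four equal $1$, so cancellation forces $\mathbb{E}[A(i,j)]=0$. Take $j = i+d$. A direct unpacking of the joint distribution at the $k+d$ relevant sites yields, for example,
\begin{equation*}
\frac{1}{\pi^W}\,\mathbb{E}[Y_i^W(1)Y_j^W(2)] \;=\; \prod_{s=k-d+1}^{k}\pi^{w_s}\;\prod_{u=1}^{k-d} M_{w_{u+d},\,w_u},
\end{equation*}
while the $Y(1)$-$Y(1)$ and $Y(2)$-$Y(2)$ analogues are supported only on $d$-periodic $W$. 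The main obstacle is evaluating these $L^k$-term sums in closed form. The key trick is to partition $\{1,\dots,k\}$ into residue classes modulo $d$: within each class $\{r, r+d, r+2d, \dots\}$ the $M$-factors form a chain linking consecutive class members, exactly one index per class lies in the tail $[k-d+1,k]$ (contributing a single $\pi$-weight), and summing along the chain via the row-sum identity $\sum_v M_{uv}=1$ collapses the class to $1$; the $d$ classes together contribute $1^d = 1$. The periodic $Y$-$Y$ and $Z$-$Z$ sums are handled by the same partition, parameterizing $W$ by its period and using the fact that the tail indices $k-d+1,\dots,k$ hit each residue class exactly once so that $\prod_{s=k-d+1}^k\pi^{w_s}$ reduces to $\prod_{r=1}^d\pi^{w_r}$. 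Summing the $n-k+1$ diagonal contributions then yields the claimed identity.
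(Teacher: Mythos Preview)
Your argument is correct, and the overlap computation via residue classes modulo $d$ is a clean way to evaluate all four sums $\sum_W \tfrac{1}{\pi^W}\EE[Y_i^W(a)Y_j^W(b)]$ directly. The paper takes a different route to the same vanishing: rather than computing each of the four sums in closed form, it proves by induction on $k$ that the combined sum $\sum_W \tfrac{1}{\pi^W}\Cov[Z_i^W,Z_j^W]$ vanishes, stripping off the last letter $w_k$ and using $\sum_w M(u,w)=1$ (for cross terms) and $\sum_w \delta(u,w)=1$ (for same-sequence terms) to reduce to the $(k-1)$-mer case. The diagonal contribution is then handled essentially as you do. What your approach buys is self-containment: the paper's proof invokes a separately established $k=1$ base case (its Proposition~\ref{prop:joint}), whereas your residue-class telescoping needs no external input and in fact yields the finer information that each of the four sums equals $1$ individually. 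What the paper's induction buys is brevity once the machinery is in place, and it avoids having to track the distinction between the $(1,2)$ case (where the $\pi$-weight sits at the chain tail) and the $(2,1)$ case (where it sits at the head), since all four terms are carried together through the induction.
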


Since for each $W$ the random variable $X_1^W-X_2^W$ has mean 0, the
expectation on the left of equation \eqref{eq:gentrace} can be viewed
as a (weighted) variance of the $k$-mer count difference.  Indeed this
observation plays an important role in the proof, which appears in
Appendix A.

We now derive consequences for the Jukes-Cantor model.
In this setting the rate matrix $Q$ has the form:
$$
Q = \begin{pmatrix}
-3\alpha & \alpha & \alpha & \alpha \\
\alpha & -3\alpha &  \alpha & \alpha \\
\alpha &   \alpha &-3\alpha &  \alpha \\
\alpha  &  \alpha & \alpha & -3\alpha
\end{pmatrix}.
$$
In the Jukes-Cantor model, the rate parameter $\alpha$ and the branch length 
$t$ are confounded
with only their product $\alpha t$ identifiable.
For simplicity we set $\alpha = 1/3$ which gives the branch length
$t$ the interpretation of the expected number of substitutions per site.  
The stationary distribution
is  uniform.   Theorem \ref{thm:gentrace} then implies the following.

\begin{cor}\label{cor:jccorrection}
  Let $S_{1}$ and $S_{2}$ be sequences of length $n$ generated under
  the Jukes-Cantor model on an edge of length $t$. Let $X_{i}$ be the
  $k$-mer count vector of $S_{i}$ and let $d = \mathbb{E}\left [ \|
    X_{1} - X_{2} \|_{2}^{2}\right]$ be the expected squared Euclidean
  distance between the $k$-mer counts.  Then
\begin{equation}
t = - \frac{3}{4}   \ln \left( \frac{4}{3}  \sqrt[k]{1 - \frac{d}{2(n-k+1)} }  - \frac{1}{3}  \right).\label{eq:JCdist}
\end{equation}
\end{cor}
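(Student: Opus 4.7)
The plan is to specialize Theorem \ref{thm:gentrace} to the Jukes--Cantor setting, compute $\tr M$ explicitly in terms of $t$, and then invert the resulting expression algebraically to solve for $t$ in terms of $d$.

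First, I would observe that under Jukes--Cantor the stationary distribution $\pi$ is uniform, so $\pi^W = 1/4^k$ for every $W \in [4]^k$. Thus the weighting factor $1/\pi^W$ in \eqref{eq:gentrace} is the constant $4^k$ and pulls out of the sum, giving
\begin{equation*}
4^k \, \mathbb{E}\bigl[\|X_1-X_2\|_2^2\bigr] \;=\; 2(n-k+1)\bigl(4^k - (\tr M)^k\bigr),
\end{equation*}
or equivalently
\begin{equation*}
\frac{d}{2(n-k+1)} \;=\; 1 - \left(\frac{\tr M}{4}\right)^k.
\end{equation*}

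Next I would diagonalize $Q$ to compute $\tr M$. With $\alpha = 1/3$, the rate matrix $Q$ has eigenvalue $0$ (with eigenvector $(1,1,1,1)^T$) and eigenvalue $-4/3$ of multiplicity three. Hence $M = \exp(Qt)$ has eigenvalues $1$ and $e^{-4t/3}$ (the latter with multiplicity three), so $\tr M = 1 + 3e^{-4t/3}$ and $\tr M / 4 = \frac{1}{4} + \frac{3}{4} e^{-4t/3}$.

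Substituting and solving is then routine algebra: one takes $k$-th roots to obtain
\begin{equation*}
\frac{1}{4} + \frac{3}{4}e^{-4t/3} \;=\; \sqrt[k]{1 - \frac{d}{2(n-k+1)}},
\end{equation*}
isolates $e^{-4t/3}$, and applies $\ln$ together with the factor $-3/4$ to recover formula \eqref{eq:JCdist}. There is no real obstacle here; the only point worth a sentence of care is that the expression inside the $k$-th root lies in $(0,1]$ and the quantity inside the logarithm lies in $(0,1]$ for all $t \ge 0$, so the formula is well-defined on the parameter range of interest.
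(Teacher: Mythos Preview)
Your proposal is correct and follows essentially the same route as the paper: specialize Theorem~\ref{thm:gentrace} to the Jukes--Cantor model using $\pi^W=4^{-k}$, compute $\tr M = 1+3e^{-4t/3}$, and invert. The only cosmetic difference is that the paper writes out the full matrix $M=\exp(Qt)$ and reads off the diagonal, whereas you compute the trace directly from the eigenvalues of $Q$; both arrive at the same expression in one step.
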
  

Equation \eqref{eq:JCdist} thus gives a model-corrected estimate of
the branch length $t$ under the Jukes-Cantor model, when in place of
the true expected value $d$ one uses an estimate obtained from data.
\begin{proof}[Proof of Corollary \ref{cor:jccorrection}]
To specialize Theorem \ref{thm:gentrace}
to the Jukes-Cantor model, take $L = 4$,  and 
$\pi^W =  4^{-k}$ for all $W \in \{{\tt A},{\tt C},{\tt G},{\tt T}\}^{k}$.
Dividing both sides of equation \eqref{eq:gentrace} by $4^{k}$ we deduce that
\begin{equation}
d  =  2(n-k+1) \left(1 - \left({\tr M}/{4}\right)^{k} \right).\label{eq:jcderiv}
\end{equation}
For the Jukes-Cantor model 
$$
M  =  \exp(Qt) =  \begin{pmatrix} y&x&x&x\\x&y&x&x\\x&x&y&x\\x&x&x&y\end{pmatrix}
$$
with
\begin{equation}  \label{eq:xy}
x= \frac{1 -   \exp(-4t/3)}{4},\ \ y=  \frac{1 + 3 \exp(-4t/3)}{4}, 
\end{equation}
so that ${\tr M}  =  {1 + 3  \exp(-4t/3)}$. 
Substituting this into equation \eqref{eq:jcderiv} 
and solving for $t$ yields the desired formula.
\end{proof}

Next we derive an analogous result for the Kimura $3$-parameter model, with rate 
matrix
$$
Q = \begin{pmatrix}
*  &  \alpha & \beta & \gamma  \\
\alpha &  *  & \gamma & \beta  \\
\beta & \gamma & *  & \alpha  \\
\gamma & \beta & \alpha &  *
\end{pmatrix}.
$$

\begin{cor}
Let $S_{1}$ and $S_{2}$ be two random sequences of length $n$ generated
under the Kimura 3-parameter model on an edge of length $t$.   
Let  $X_{i}$ be the $k$-mer count vector of $S_{i}$.
Then 
$$
\mathbb{E}[ \| X_{1} - X_{2} \|_{2}^{2}]  = 2(n - k + 1)
\left(1 - \left( \frac{ 1 + e^{-2(\alpha + \beta)t} + e^{-2(\alpha + \gamma)t}  
+ e^{-2(\beta + \gamma)t}}{4}\right)^{k}  \right).
$$
\end{cor}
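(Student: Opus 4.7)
The plan is to mirror the proof of Corollary~\ref{cor:jccorrection}. Under the Kimura 3-parameter model the stationary distribution is uniform, so $\pi^W = 4^{-k}$ for every $W \in [4]^k$. Dividing both sides of \eqref{eq:gentrace} by $4^k$ therefore gives
\[
\mathbb{E}[\|X_1-X_2\|_2^2] = 2(n-k+1)\bigl(1 - (\tr M/4)^k\bigr),
\]
so the corollary reduces to establishing
\[
\tr M = 1 + e^{-2(\alpha+\beta)t} + e^{-2(\alpha+\gamma)t} + e^{-2(\beta+\gamma)t}.
\]

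To compute the trace, I would use the identity $\tr M = \tr \exp(Qt) = \sum_i e^{\lambda_i t}$, where the $\lambda_i$ are the eigenvalues of $Q$. The K3P rate matrix is classically diagonalized by the Hadamard transform: after identifying the four nucleotides with the elements of $\mathbb{Z}_2 \times \mathbb{Z}_2$, the matrix $Q$ lies in the commutative group algebra of that group, so its eigenvectors are the four characters, realized as the Hadamard vectors $(1,1,1,1)$, $(1,-1,1,-1)$, $(1,1,-1,-1)$, $(1,-1,-1,1)$. A short row-by-row check against $Q$ yields the corresponding eigenvalues $0$, $-2(\alpha+\gamma)$, $-2(\beta+\gamma)$, $-2(\alpha+\beta)$, with the precise pairing determined by the nucleotide-to-group correspondence implicit in the displayed $Q$.

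Summing the four exponentials then produces exactly the claimed expression for $\tr M$, and substitution into the first display completes the derivation. There is no genuine obstacle in this proof: Theorem~\ref{thm:gentrace} does all of the probabilistic work, and the spectral computation is a textbook application of character theory for the K3P model. The only point requiring care is bookkeeping --- confirming that each Hadamard eigenvector pairs with the correct pair of rate parameters under the $\{A,C,G,T\}$ ordering fixed by the matrix $Q$ in the statement --- but this is a finite verification with no subtlety.
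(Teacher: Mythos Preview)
Your proposal is correct and follows exactly the approach the paper intends: the paper does not spell out a proof of this corollary, but the argument is implicit in the proof of Corollary~\ref{cor:jccorrection} together with the eigenvalue formula~\eqref{eq:eigformula} stated immediately afterward. Your Hadamard diagonalization is the standard way to produce those eigenvalues for K3P, and your bookkeeping is right.
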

Note that the right side of this equation is strictly increasing
as a function of $t$. Thus if $\alpha$, $\beta$, $\gamma$ 
are known, and $\mathbb{E}[ \| X_{1} - X_{2} \|_{2}^{2}]$  is estimated, 
it is straightforward to estimate $t$ using 
a numerical root finding
algorithm.

For general rate matrices $Q$, the matrix $M = \exp(Qt)$ has trace
\begin{equation}\label{eq:eigformula}
\tr M =  \sum_{i = 1}^L  e^{\lambda_i t}
\end{equation}
where $\lambda_1, \ldots, \lambda_L$ are the eigenvalues of $Q$,
counted with multiplicity.  Since $Q$ is a rate matrix, all these
eigenvalues have nonpositive real part.  If all the eigenvalues are
real, then equation \eqref{eq:eigformula} shows $\tr M$ is a
decreasing function of $t$.  This means we can consistently estimate
the branch length if we assume $Q$ is known and we have an estimate
for the expectation in equation \eqref{eq:gentrace}.  For instance,
this argument shows that for any time-reversible rate matrix (i.e.,
from the general time-reversible model GTR) we can obtain
statistically consistent estimates for the branch lengths.


\section{Jukes-Cantor Correction}\label{sec:JCcor}

In this section, we give a detailed explanation of the statistical 
consistency for phylogenetic tree reconstruction using our
Jukes-Cantor correction from Corollary \ref{cor:jccorrection}.
In particular, we explain that without this correction, even with
arbitrary amounts of data generated from the model, the $k$-mer
method based on the squared Euclidean distance is statistically
inconsistent for every $k$.

Corollary \ref{cor:jccorrection} gives an estimate of branch lengths
under the Jukes-Cantor model based on the value of $d =
\mathbb{E}\left [ \| X_{1} - X_{2} \|_{2}^{2}\right]$. Applying the
same formula to an empirical estimate $\hat d$ of $d$, it can thus be
viewed as giving a model-based distance correction to the naive
distance estimate $\hat d$. This is similar to the usual Jukes-Cantor
correction applied to the frequency $\hat p$ of mismatches of bases in
aligned sequences.  When $k = 1$, equation \eqref{eq:JCdist}
simplifies to
$$
t =  - \frac{3}{4}   \ln \left(  1 - \frac{4}{3} \cdot \frac{d}{2n}  \right)
$$
which is clearly very similar to the usual Jukes-Cantor correction obtained from an
alignment with $\frac{d}{2n}$ playing the role  of $p$.
  
That $\frac{d}{2} = p$ for $k = n =1$ can be justified rigorously as
follows: For a single aligned site in two sequences, the probability
of a mismatch is $p$ under the Jukes-Cantor model.  The $k$-mer count
vectors $X_1$ and $X_2$ are the elementary basis vectors $X_1 =
\mathbf e_i$ and $X_2 = \mathbf e_j$, and the quantity $\| X_1 - X_2
\|_2^2$ is $0$ or $2$ depending if $i = j$ or $i \neq j$.  Thus, the
expected value $d = \mathbb{E}\left [ \| X_{1} - X_{2}
  \|_{2}^{2}\right] = (1-p) \cdot 0 + p \cdot 2 = 2p$.  It follows
that our estimate for the branch length $t$ is exactly the
Jukes-Cantor corrected estimate when $1$-mer frequencies at each site
are used to estimate $d$.  Indeed, formula \eqref{eq:JCdist} gives a
natural generalization of the pairwise corrected distance to the
present context of $k$-mers.

To understand the potential impact of the correction of Corollary
\ref{cor:jccorrection} we first work theoretically, by assuming we
have the true expected value $d$ in hand.  Later, in Section
\ref{sec:simulation}, we use simulations to investigate the usefulness
of the branch length estimate \eqref{eq:JCdist} with finite length
sequences, to understand its practical impact.

We follow the framework suggested by \citet{Felsenstein1978}.
We consider an unrooted four-leaf tree with topology $12|34$.  
Two branch lengths $t_a$ and $t_b$, each ranging over the interval $(0, \infty)$,
are used, with $t_a$ on edges $2|134$, $3|124$, and $12|34$ and
$t_b$ on the edges $1|234$ and $4|123$.  This tree is depicted
in Figure \ref{fg:felsen}.  The branch lengths are transformed to 
probabilities $a$ and $b$ in $(0, .75)$, probabilities that bases at a 
site differ at opposite ends of a branch.

Consider the naive $k$-mer method that uses $d$ as a distance together
with the 4-point condition (or equivalently, Neighbor Joining) to
infer a tree topology. To analyze its behavior, we must first relate
the expected values of
$$
d  =  2(n-k+1)\left (1 - \left(\frac{1 + 3  \exp(-4t/3)}{4}\right )^{k} \right),
$$
for each taxon pair to the underlying branch parameters $a$ and $b$.
As $a$ is
the probability that some change from the current state is made along the
edge of scaled length $t_a$ ($y$ from equation \eqref{eq:xy}), we have that 
the diagonal element from the associated Jukes-Cantor transition matrix is
\begin{equation}\label{eq:transformed}
1-a  =  \frac{1 + 3  \exp(-4t_{a}/3)}{4}
\end{equation}
and thus
$$
t_{a}  =  - \frac{3}{4} \ln \left (1-\frac{4}3 a \right ).
$$
Similar arithmetic gives the formula for $t_b$.

This yields the following formulas for the expected distance in
terms of the parameters $a, b$:
\begin{eqnarray*}
d_{12} = d_{34}  & = & {\textstyle 2(n-k+1) \left(1 - \left(\frac{1 + 3(1-\frac{4}{3}a) (1-\frac{4}{3}b) }{4}\right)^{k} \right)},\\
d_{13} = d_{24}  & = & {\textstyle 2(n-k+1) \left(1 - \left(\frac{1 + 3(1-\frac{4}{3}a)^{2} (1-\frac{4}{3}b) }{4}\right)^{k} \right) },\\
d_{14}     & = &  {\textstyle 2(n-k+1) \left(1 - \left(\frac{1 + 3(1-\frac{4}{3}a) (1-\frac{4}{3}b)^{2} }{4}\right)^{k} \right) },\\
d_{23}     & = &  {\textstyle 2(n-k+1) \left(1 - \left(\frac{1 + 3(1-\frac{4}{3}a)^3 }{4}\right)^{k} \right). }
\end{eqnarray*}

To construct correctly the unique
true tree $12|34$ using the $4$-point condition or Neighbor Joining
requires that these distances satisfy
$$
d_{12} + d_{34} <  \min( d_{13} + d_{24}, d_{14} + d_{23} ).
$$
Note that $d_{12} + d_{34} < d_{13} + d_{24}$ for all $a>0$,
so we focus on the condition
$$d_{12} + d_{34} < d_{14} + d_{23}.$$  
Using the formulas above, this becomes:
$$
2\left(1 + 3(1-{\textstyle \frac{4}{3}} a) (1-{\textstyle \frac{4}{3}} b) \right)^{k}  \geq
 \left(1 + 3(1-{\textstyle \frac{4}{3}} a) (1-{\textstyle \frac{4}{3}} b)^{2}  \right)^{k} 
+ 
\left(1 +  3(1-{\textstyle \frac{4}{3}} a)^3 \right)^{k}. 
$$
The values of $a$, $b$ for which this is satisfied are shown by the
white regions in Figure \ref{fg:felsen}, for $k = 1,3,5$.  As $k$
increases the white regions change; when $k=1$ the boundary curve is a
circle, and as $k\to\infty$ it approaches a parabola with vertex in
the upper right corner, passing through the lower left.  Note that the
white region indicates where the naive $k$-mer distance inference
behaves well provided one knows $d$ exactly --- in practice one only
has an estimate of $d$ and should not expect even this good behavior.

In contrast, using the corrected Jukes-Cantor $k$-mer distance from
equation \eqref{eq:JCdist} to make diagrams analogous to those of
Figure \ref{fg:felsen} would show the entire square white.  If $d$
were known exactly, inference would be perfect.  The corrected
distances lead to statistically consistent distance methods on
$4$-taxon trees.  More generally, our argument in Section
\ref{sec:formulas} shows that we can use Theorem \ref{thm:gentrace} to
derive statistically consistent estimates for the evolutionary time
between species when we have a known time-reversible rate matrix $Q$.

\begin{figure}[h]
\resizebox{!}{3cm}{\includegraphics{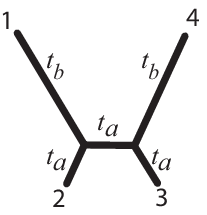}}
\resizebox{!}{3cm}{\includegraphics{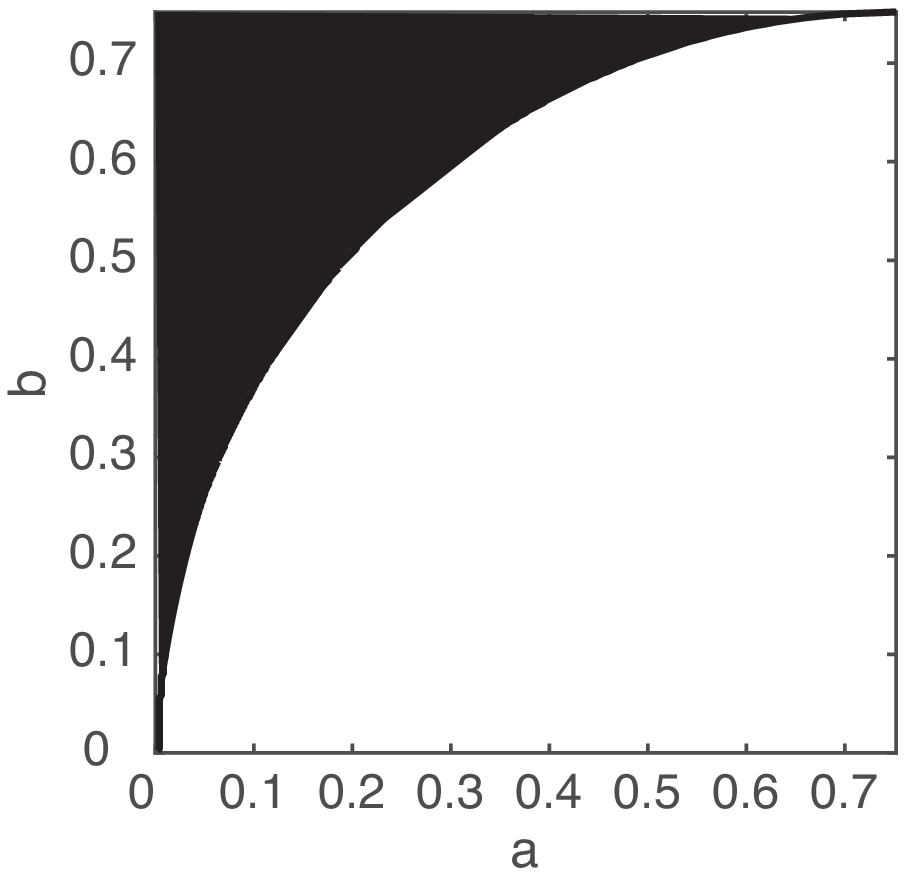}} 
\resizebox{!}{3cm}{\includegraphics{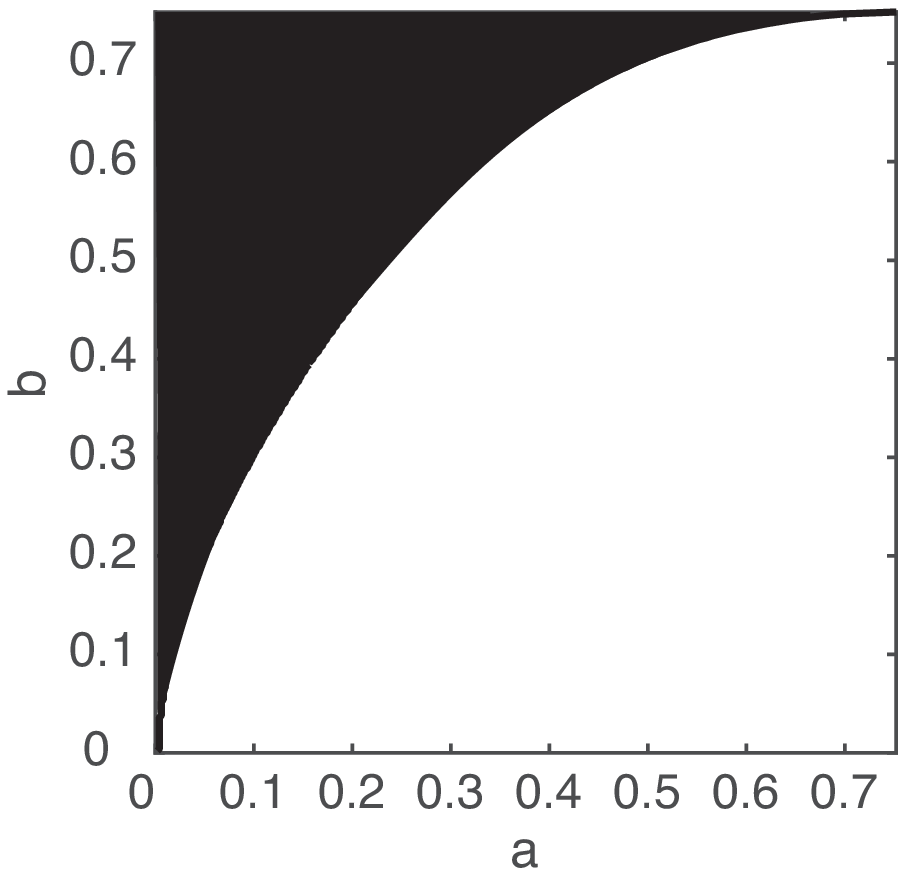}}
\resizebox{!}{3cm}{\includegraphics{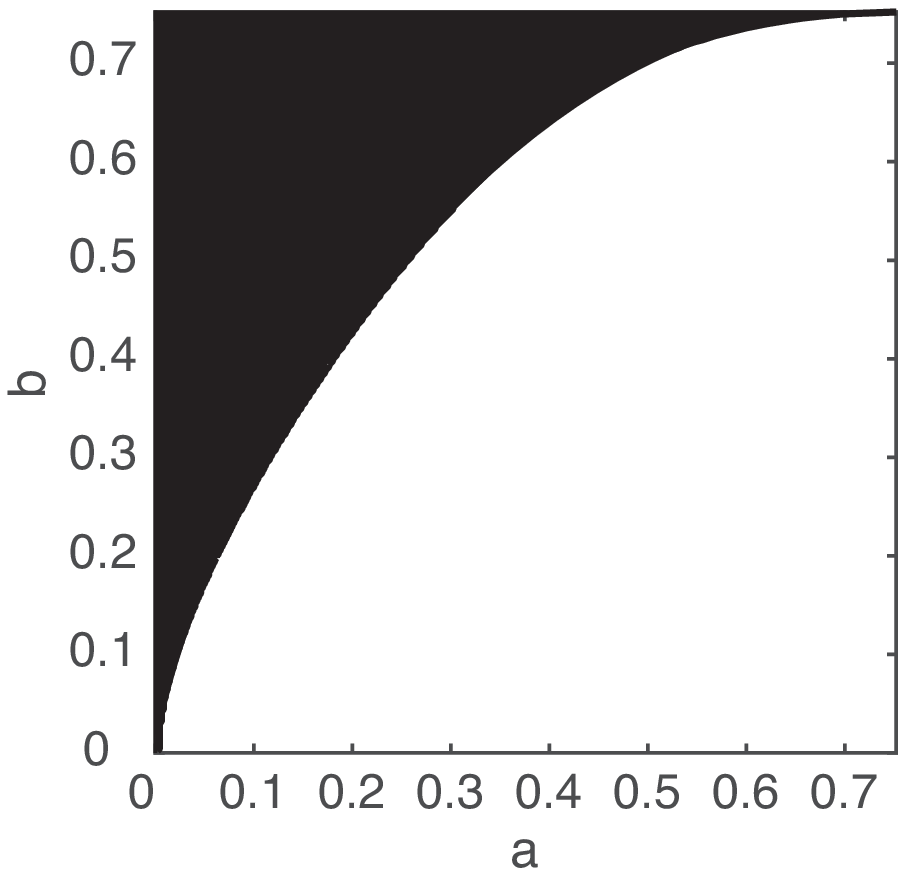}} 
\caption{White region is the zone of consistency for tree inference using the naive $k$-mer distance combined with the 4-point condition. From left to right, $k =1, 3$, and $5$. The usual ``Felsenstein Zone" is in the upper left.}\label{fg:felsen}
\end{figure}


\section{Identifiability of indel-free model parameters}\label{sec:ident}

The results in Section \ref{sec:formulas} prove that, with knowledge of the stationary 
base frequency $\pi$ of an unknown Markov matrix $M$ describing base
substitutions from one sequence to another, $\tr M$ is identifiable from the joint 
distribution of $k$-mer counts in the two sequences. If one assumes a continuous 
time model with $M= \exp(Qt)$ and $Q$ a known time-reversible
rate matrix, then this is sufficient to identify 
lengths $t$ between taxa on the tree.  As a consequence, with $Q$ known 
the metric phylogenetic tree relating many taxa is identifiable.

In fact, more is true: $\pi$ and $M$ are identifiable from $1$-mer count 
distributions as well.  This is the result in the next
proposition, which in addition to being interesting in its own right, 
plays a role in the proof of Theorem \ref{thm:gentrace}. 
Its proof appears in Appendix B.
Note that in this result we do not assume that base frequency
distributions $\pi_i$ 
are the stationary vectors of the Markov matrix.

\begin{prop} \label{prop:joint} From the joint distribution of 1-mer
  count vectors $X_1$ and $X_2$ of two sequences $S_1$ and $S_2$ of
  length $n$, one can identify the distributions $ \pi_1$ and $\pi_2$
  of bases in each sequence, and the joint distribution
  $P=\diag(\pi_1)M$ of bases at a single site in the two sequences.
  Specifically, $\pi_i =\frac 1n \EE[X_i ]$, and for $w,u\in [L]$,
$$P_{wu}= \frac 12\left ( \pi_1^w+\pi_2^u-\frac {\pi_1^w\pi_2^u}{n}\EE\left [ \left (\frac {X_1^w} {\pi_1^w}-\frac {X_2^u  }{\pi_2^u}\right )^2\right ] \right).$$
\end{prop}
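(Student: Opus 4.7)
The plan is to exploit the fact that the sites are i.i.d.\ so that $X_i^w$ and the cross-quantity inside the expectation decompose as sums over independent sites, letting the per-site joint distribution $P$ fall out of a variance calculation.

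First I would set up notation: for each site $s\in\{1,\dots,n\}$, let $Y_s^i\in [L]$ denote the base at site $s$ of sequence $S_i$, so that $X_i^w=\sum_{s=1}^n \mathbf 1[Y_s^i=w]$. The pairs $(Y_s^1,Y_s^2)$ are i.i.d.\ across $s$ with common joint distribution $P_{wu}=\Prob(Y^1=w,Y^2=u)$, and the marginals are $\pi_1^w=\sum_u P_{wu}$, $\pi_2^u=\sum_w P_{wu}$. Linearity of expectation then immediately gives $\EE[X_i^w]=n\pi_i^w$, proving the first formula $\pi_i=\frac{1}{n}\EE[X_i]$.

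For the second formula, fix $w,u\in[L]$ and introduce the per-site random variable
$$Z_s \defas \frac{\mathbf 1[Y_s^1=w]}{\pi_1^w}-\frac{\mathbf 1[Y_s^2=u]}{\pi_2^u},$$
so that $\frac{X_1^w}{\pi_1^w}-\frac{X_2^u}{\pi_2^u}=\sum_{s=1}^n Z_s$. Each $Z_s$ has mean zero (since $\EE[\mathbf 1[Y_s^i=\cdot]]=\pi_i^\cdot$), so the entire sum has mean zero, and by independence of the sites the expectation of its square equals $n\,\EE[Z_1^2]$. Expanding $Z_1^2$ and using $\mathbf 1[\cdot]^2=\mathbf 1[\cdot]$ gives
$$\EE[Z_1^2]=\frac{1}{\pi_1^w}+\frac{1}{\pi_2^u}-\frac{2\,\EE[\mathbf 1[Y^1=w]\mathbf 1[Y^2=u]]}{\pi_1^w\pi_2^u}=\frac{1}{\pi_1^w}+\frac{1}{\pi_2^u}-\frac{2P_{wu}}{\pi_1^w\pi_2^u}.$$
Multiplying by $n$ and solving the resulting linear equation for $P_{wu}$ yields exactly the displayed formula.

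There is essentially no obstacle here beyond careful bookkeeping; the crux is the observation that $P$ enters only through the single cross term $\EE[\mathbf 1[Y^1=w]\mathbf 1[Y^2=u]]$, so one scalar expectation (namely the one in the proposition) suffices to recover each entry of $P$. The only modeling point worth flagging explicitly is that we are not assuming $\pi_1,\pi_2$ are stationary for $M$ nor even that they coincide; the argument only uses that the sites are i.i.d.\ from the joint law $P$, and the statement $P=\diag(\pi_1)M$ is then just the definition of $M$ as the conditional distribution of $Y^2$ given $Y^1$.
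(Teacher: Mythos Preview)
Your proof is correct and essentially identical to the paper's: both decompose the centered quantity $\frac{X_1^w}{\pi_1^w}-\frac{X_2^u}{\pi_2^u}$ as a sum of i.i.d.\ mean-zero per-site terms, use independence to reduce the second moment to $n$ times a single-site variance, expand via $\mathbf 1[\cdot]^2=\mathbf 1[\cdot]$, and solve for $P_{wu}$. The paper also explicitly remarks, as you do, that no stationarity assumption on $\pi_1,\pi_2$ is used.
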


The formula for $P_{wu}$ in this proposition ultimately underlies our
suggested practical inference method. However, there is a simpler
formula, applying for any $k$, showing that from a joint $k$-mer count
vector distribution one can identify the joint probabilities $P_{wu}$:
For sequences of length $n$ and the particular $k$-mers $W=www\ldots
w$ and $U=uuu\ldots u$,
$$
\Prob(X_1^W=n-k+1, \, X_2^U=n-k+1)= {(P_{wu})}^n.$$ 
Of course the method of estimation suggested by this approach
is useless in practice, since it is based on events that are rarely, if ever, observed.

Nonetheless, since $P$ and $\pi_1$ can be found from the joint
distribution of $X_1$ and $X_2$ for any $k$, the transition matrix
$M=\diag(\pi_1)^{-1}P$ is also identifiable.  In the continuous-time
model setting, where $M=\exp (Qt)$, $Q$ can be found, first up to a
scalar multiple, and then normalized.  Putting this together yields
the following.

\begin{thm} For an indel-free GTR model, all parameters, both
  numerical ones and tree topology, can be identified from pairwise
  joint $k$-mer count distributions.
\end{thm}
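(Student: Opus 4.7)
The plan is to chain three stages: (i) for each pair of taxa, identify the pairwise Markov transition matrix; (ii) under the continuous-time GTR assumption, extract the common rate matrix $Q$ and the pairwise evolutionary time; (iii) from the resulting leaf-to-leaf distance matrix, recover the tree topology and all branch lengths.

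For (i), fix any two taxa $i,j$. For $k=1$, Proposition \ref{prop:joint} already identifies the single-site joint base distribution $P$ and the marginals $\pi_1,\pi_2$ from the pairwise $1$-mer count distribution. For $k\geq 2$, the same conclusion follows from the remark after Proposition \ref{prop:joint}: for constant $k$-mers $W=ww\cdots w$ and $U=uu\cdots u$ one reads off $P_{wu}$ from $\Prob(X_1^W=n-k+1,\,X_2^U=n-k+1)=(P_{wu})^n$, and then $\pi^w$ is the row sum of $P$ and $\pi^u$ its column sum. Under GTR the process is at stationarity, so $\pi_1=\pi_2=\pi$, and the pairwise Markov matrix is determined as $M_{ij}=\diag(\pi)^{-1}P$.

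For (ii), since $Q$ is a GTR (time-reversible) rate matrix, the conjugate $\diag(\pi)^{1/2}Q\diag(\pi)^{-1/2}$ is symmetric, so $Q$ is diagonalizable with real eigenvalues. Consequently $M_{ij}=\exp(Q\,t_{ij})$ has strictly positive real eigenvalues, its principal matrix logarithm is uniquely defined, and $\log M_{ij}=Q\,t_{ij}$. Imposing the standard normalization $-\sum_w \pi^w q_{ww}=1$ (so $t$ counts expected substitutions per site) separates the scalar $t_{ij}$ from $Q$, yielding a single $Q$ common to all taxon pairs together with distinct pairwise scalars $t_{ij}$.

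For (iii), under the indel-free GTR tree model with a common $Q$ and branch lengths $t_e$, the pairwise time $t_{ij}$ equals the sum of branch lengths along the path between leaves $i$ and $j$, so $(t_{ij})$ is an additive tree metric on the leaves. The four-point condition (equivalently, consistency of Neighbor Joining) then uniquely determines the unrooted tree topology and all internal branch lengths, completing the identification of every model parameter.

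The main obstacle is stage (ii): for an arbitrary Markov matrix $M$, the equation $M=\exp(Qt)$ can admit multiple rate-matrix solutions, or fail to have a real solution at all. The symmetrizability enforced by GTR rescues the step by forcing the spectrum of $Q$ to be real and that of $M$ to be strictly positive, which makes the principal matrix logarithm unambiguous. Dropping time-reversibility would require a substantially more delicate identifiability argument at this point.
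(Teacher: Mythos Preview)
Your proposal is correct and follows essentially the same route as the paper: identify $P$ (hence $M$) from the pairwise $k$-mer count distribution via Proposition~\ref{prop:joint} or the constant-$k$-mer observation, take the matrix logarithm to recover $Q$ up to normalization and thereby the pairwise times $t_{ij}$, and then reconstruct the metric tree from the additive distances. Your stage~(ii) is in fact more carefully argued than the paper's, which simply asserts that $Q$ can be recovered and normalized; your use of the symmetrization $\diag(\pi)^{1/2}Q\,\diag(\pi)^{-1/2}$ to guarantee a real spectrum and hence a well-defined principal logarithm fills a gap the paper leaves implicit.
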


If we consider sequences three-at-a-time, rather than pairwise, we obtain an
analog of Proposition \ref{prop:joint}, again without assuming stationarity.
This new result is based on third moments, rather than second, and its proof  is given in Appendix B.

\begin{prop}\label{prop:joint3}
  For a 3-leaf tree, the joint distribution $P=(P_{uvw})$ of site
  patterns is identifiable from the joint 1-mer count vector
  distributions of the 3 taxa.  Specifically, define a random
  variable $$Y_{uvw}=\alpha {X_1^u} +\beta X_2^v +\gamma X_3^w,$$
  where $\alpha, \beta,\gamma$ are constants chosen so
$$ \alpha{\pi_1^u}+ \beta{\pi_2^v}+  \gamma{\pi_3^w}=0.$$
Then
\begin{multline}\label{eq:3way}
P_{uvw}= \frac 1{6\alpha\beta\gamma n}\EE(Y_{uvw}^3)+\frac 12 \left (\frac {\alpha+\beta}\gamma P_{uv+} +  \frac {\alpha+\gamma}\beta P_{u+w}+\frac {\beta+\gamma}\alpha P_{+vw}\right )\\
-\frac 16\left ( \frac {\alpha^2}{\beta\gamma}\pi_1^u+\frac {\beta^2}{\alpha\gamma}\pi_2^v+ \frac {\gamma^2}{\alpha\beta}\pi_3^w\right ) .
\end{multline}
where the pairwise marginal distributions $P_{uv+}$, $P_{u+w}$, and $P_{+vw}$ in equation
 \eqref{eq:3way} are identifiable by Proposition \ref{prop:joint}. 
\end{prop}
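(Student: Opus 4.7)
The plan is to imitate the second-moment argument behind Proposition~\ref{prop:joint}, but extract the three-way joint probability $P_{uvw}$ from a carefully chosen third moment. For fixed $u,v,w$, introduce the per-site indicator variables $A_s=\mathbf{1}[S_1(s)=u]$, $B_s=\mathbf{1}[S_2(s)=v]$, $C_s=\mathbf{1}[S_3(s)=w]$ and set $Z_s=\alpha A_s+\beta B_s+\gamma C_s$, so that $Y_{uvw}=\sum_{s=1}^n Z_s$. Under the indel-free model the site-wise triples $(A_s,B_s,C_s)$ are i.i.d.\ across sites, and the defining linear constraint on $\alpha,\beta,\gamma$ is precisely $\EE[Z_s]=0$.

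The payoff of the mean-zero i.i.d.\ structure is that expanding $\EE[Y_{uvw}^3]=\sum_{s,t,r}\EE[Z_sZ_tZ_r]$ kills every off-diagonal contribution: any triple $(s,t,r)$ in which the indices are not all equal contains an index appearing only once, so independence and the vanishing first moment force that summand to be zero. Therefore $\EE[Y_{uvw}^3]=n\,\EE[Z^3]$ for the single-site variable $Z=\alpha A+\beta B+\gamma C$, reducing everything to a one-site computation.

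Next I would expand $Z^3$ by the multinomial theorem, simplifying with $A^k=A$, $B^k=B$, $C^k=C$ since these are $\{0,1\}$-valued. Taking expectations and recognizing $\EE[A]=\pi_1^u$, $\EE[AB]=P_{uv+}$, $\EE[ABC]=P_{uvw}$, and so on, yields a linear identity in which $P_{uvw}$ appears with coefficient $6\alpha\beta\gamma$, together with terms involving the single-taxon marginals $\pi_i^\cdot$ and the pairwise marginals $P_{uv+}$, $P_{u+w}$, $P_{+vw}$. Solving for $P_{uvw}$ produces formula~\eqref{eq:3way}.

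Identifiability then follows because every ingredient on the right is recoverable from the joint 1-mer count distribution: $\pi_i^\cdot=\frac{1}{n}\EE[X_i^\cdot]$ from first moments, the pairwise marginals from Proposition~\ref{prop:joint}, and $\EE[Y_{uvw}^3]$ directly as a third moment of the observable counts. The one delicate point is the choice of $\alpha,\beta,\gamma$: we need $\alpha\beta\gamma\ne 0$ in order to divide, but the constraint $\alpha\pi_1^u+\beta\pi_2^v+\gamma\pi_3^w=0$ carves out a two-dimensional subspace in which this generic condition fails only on a codimension-one set, so valid triples always exist (e.g.\ $\alpha=\pi_2^v\pi_3^w$, $\beta=\pi_1^u\pi_3^w$, $\gamma=-2\pi_1^u\pi_2^v$ provided the base frequencies are positive). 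The main obstacle is largely bookkeeping in the multinomial expansion and verifying that the coefficients of $\pi_i^\cdot$ and $P_{\cdot\cdot\cdot}$ collapse into the compact form displayed in the statement; conceptually the argument is a direct third-moment analogue of the proof of Proposition~\ref{prop:joint}.
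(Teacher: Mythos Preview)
Your proposal is correct and follows essentially the same route as the paper: reduce $\EE[Y_{uvw}^3]$ to $n\,\EE[Z^3]$ using independence across sites and the mean-zero condition (the paper phrases this as additivity of third central moments), then expand the cube using idempotence of the indicators and solve for $P_{uvw}$. Your added remarks on the existence of $\alpha,\beta,\gamma$ with $\alpha\beta\gamma\neq 0$ go slightly beyond what the paper makes explicit, but the core argument is identical.
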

Proposition \ref{prop:joint3} is significant in that it establishes
that the distribution of $1$-mer counts contains enough information to
identify parameters of more general models than our preceding
arguments allow.  Recall, for instance, that parameters for the
General Markov (GM) model, in which the base substitution process on
each edge of the tree can be specified by a different Markov matrix,
are identifiable from the marginalization of the site pattern
distribution to $3$-taxon sets \citep{Chang96}, but are not
identifiable from pairwise marginalizations.  In the present context
of $k$-mers, we obtain the following.

\begin{cor} For an indel-free GM model,
all parameters, both numerical ones and tree topology, are identifiable from the joint $1$-mer count vector distributions on $n$ taxa. 
\end{cor}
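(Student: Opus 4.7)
The plan is to reduce the statement to the $3$-taxon identifiability result of \citep{Chang96} quoted in the paragraph preceding the corollary, namely that the parameters and tree topology of the indel-free GM model are identifiable from the joint site pattern distributions on all $3$-taxon subsets of leaves. The remaining task is therefore to show that every $3$-taxon site pattern distribution $P_{uvw}$ can be recovered from the joint $1$-mer count vector distribution on those three taxa; Proposition \ref{prop:joint3} is tailored for exactly this purpose.

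First I would invoke Proposition \ref{prop:joint} on each pair of taxa to extract, from the pairwise joint $1$-mer count distributions, the single-taxon base frequency vectors $\pi_i$ and the pairwise joint site distributions $P_{uv+}$, $P_{u+w}$, and $P_{+vw}$ that appear on the right-hand side of equation \eqref{eq:3way}. Next, for a fixed triple of bases $(u,v,w)$, I would choose nonzero constants $\alpha,\beta,\gamma$ satisfying the linear constraint $\alpha\pi_1^u+\beta\pi_2^v+\gamma\pi_3^w=0$. Such a choice exists whenever at least two of $\pi_1^u,\pi_2^v,\pi_3^w$ are positive; if two or more vanish then the joint probability $P_{uvw}$ itself vanishes and no inference is required. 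The third moment $\EE(Y_{uvw}^3)$ of the scalar random variable $Y_{uvw}=\alpha X_1^u+\beta X_2^v+\gamma X_3^w$ is a function of the joint $1$-mer count distribution on the three taxa and is therefore identifiable. Substituting all identified quantities into formula \eqref{eq:3way} yields $P_{uvw}$.

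Iterating over all base triples $(u,v,w)\in[L]^3$ identifies the full joint $3$-taxon site pattern distribution for the chosen three taxa, and iterating over all $\binom{n}{3}$ triples of taxa yields every $3$-taxon marginal of the $n$-taxon site pattern distribution. By the cited result of \citep{Chang96}, these marginals identify the tree topology and all numerical parameters of the GM model.

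The main obstacle I anticipate is organizational rather than conceptual: one must verify that the nonzero-denominator requirement on $\alpha,\beta,\gamma$ in Proposition \ref{prop:joint3} can always be met for every base triple that arises with positive probability, and handle the degenerate cases where one or more of the $\pi_i$'s have zero entries cleanly. Once this bookkeeping is in place, recovery of $P_{uvw}$ is a direct evaluation of \eqref{eq:3way}, and the rest of the proof is simply an appeal to the known $3$-taxon identifiability for the GM model.
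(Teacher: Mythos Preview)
Your proposal is correct and follows precisely the route the paper intends: the corollary is stated without explicit proof, as an immediate consequence of Proposition~\ref{prop:joint3} together with the cited identifiability result of \citet{Chang96} for the GM model from $3$-taxon marginals. Your write-up is in fact more detailed than the paper's, since you take care to justify the choice of nonzero $\alpha,\beta,\gamma$ and to handle the degenerate cases where some $\pi_\ell^u$ vanish; the paper leaves these bookkeeping points implicit.
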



\section{Practical $k$-mer distances between sequences}\label{sec:practical}

In this section, we apply the results of Section \ref{sec:formulas} to
develop practical methods for estimating pairwise distances between
sequences.  Those derivations were made under the assumption that
sequences evolved in the absence of an indel process, and thus that
sequences could be unambiguously aligned.  In practice, however, we
desire a method of distance estimation that can be applied in the
presence of a mild indel process, without a precise alignment.
Although this violates our model assumptions, in Section
\ref{sec:simulation} we use simulations to investigate how robust our
resulting method is to such a violation.

\smallskip

Assuming a Jukes-Cantor process of site substitution and no indel
process, formula \eqref{eq:JCdist} of Corollary \ref{cor:jccorrection}
suggests a natural definition for a distance, provided we have a good
method of approximating $d = \mathbb{E}\left [ \| X_{1} - X_{2}
  \|_{2}^{2}\right]$.  If the observed values of the random variables
$X_1$ and $X_2$ are denoted in lower case, so $x_{1}$ and $x_{2}$ are
observed $k$-mer count vectors, then one could simply compute
\begin{equation}\label{eq:badestimate}
 \| x_{1} - x_{2} \|_{2}^{2}=\sum_{W \in [L]^{k}}  \left (x^{W}_{1} - x^{W}_{2}\right )^{2}
\end{equation}
as a point estimate for $d$.  This is a very poor estimate for the
expected value, however, since only one sample ($\|x_1 - x_2\|^2$) is
used to estimate a mean.  Indeed, this estimate has large
variance. Moreover, naively increasing sequence length (number of
$k$-mers) would do nothing to address the fundamental problem of
needing more samples to estimate a mean well.

To obtain a better estimate of $d$, with smaller variance, we instead
subdivide the two sequences into a fixed number $B$ of contiguous
blocks.  Assuming for $1 \le i \le B$ that the $i$th blocks of the two
sequences are at least roughly orthologous, we compute the $k$-mer
frequencies $x_{j,i}$ for each block $i$ in sequence $j$.  Then the
values of $\| x_{1,i} - x_{2,i} \|_{2}^{2}$ for the $B$ blocks can be
averaged to estimate $d$.  In this framework, we are adopting the
approach introduced by \citet{Daskalakis2013}.

We have in mind two scenarios for using this approach on data, which
are displayed in Figure \ref{figure:scenarios}.  The first is under
the assumption that if indels occurred, they were distributed evenly
over the sequences. Then if the blocks are defined as a fixed fraction
of the full sequence lengths, most of the sites in the $i$th blocks of
the two sequences will be orthologous.  The second is that the blocks
arise naturally in the data; for instance if a dataset consists of
multiple genes, then each gene can be treated as a block.  In this
case, the point estimates for each gene would be averaged over all
genes, making appropriate adjustments for their varying lengths.

\smallskip

\begin{figure}[h]
\begin{center}
\includegraphics[width=7cm]{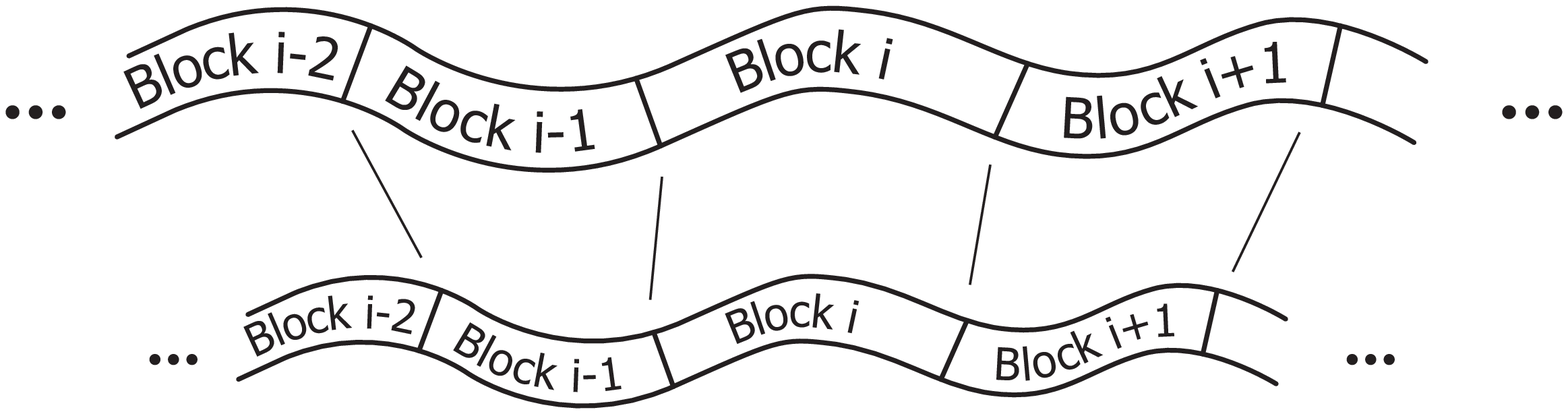}
\hskip 1cm
\includegraphics[width=7cm]{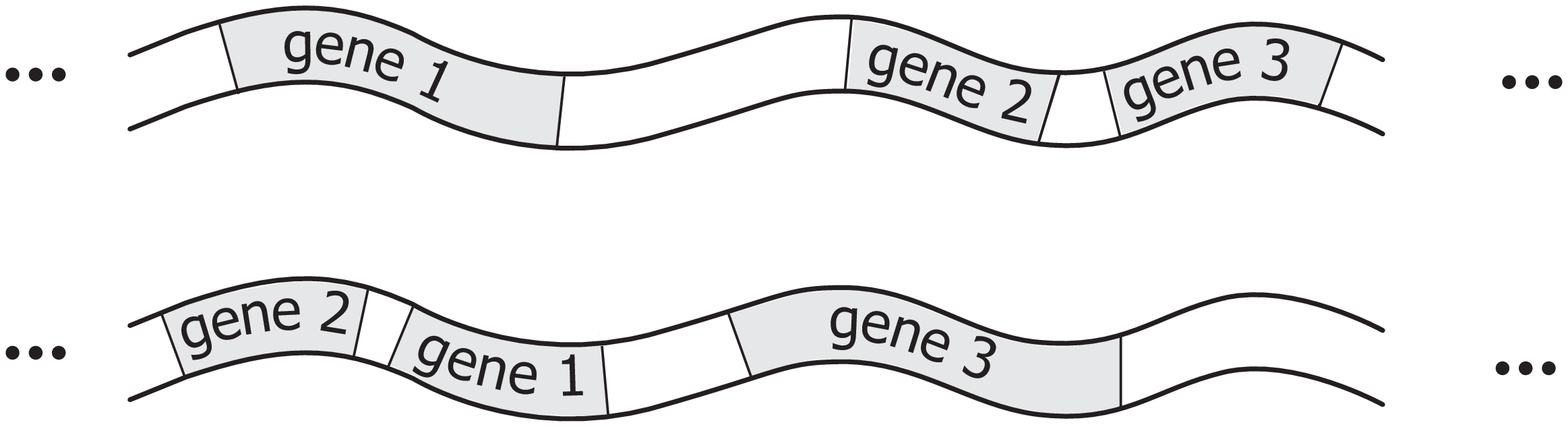}
\caption{\label{figure:scenarios} On the left, two sequences in which blocks $i$ are
roughly orthologous, perhaps due to a uniform indel process.  
On the right, two genomes in which 
genes serve as blocks for data analysis.}
\end{center}
\end{figure}

To be precise, in addition to specifying $k$, under the first scenario
we must also specify a number $B$ of blocks to be used in our
calculations. To subdivide a sequence $S_j$ of length $n_j$ as
uniformly as possible, each block will have length $n_{j,i}=n_j/B$,
suitably rounded for $1\le i \le B$, so block lengths for a single
sequence can differ at most by one.  Under the second scenario, using
natural blocks like genes, the length $n_{j,i}$ is specified by the
data, and will vary more widely.

Now for block $i$ in sequence $j$, let $x_{j,i}$ be the  $k$-mer count 
vector and $\mu_{j,i}=
(n_{j,i}-k+1)/4^k$ the mean $k$-mer count under the Jukes-Cantor model. We define
\begin{equation}
\tilde d  = \frac 1B\sum_{i=1}^B  \bigg\| \frac{(x_{1,i}-\mu_{1,i})}{\sqrt{n_{1,i}-k+1}} - 
\frac{(x_{2,i}-\mu_{2,i})}{\sqrt{n_{2,i}-k+1}} \bigg\|_{2}^{2}.
\label{eq:destimate}\end{equation}
Note that in this formula both the centering of $x_{j,i}$ by
subtracting $\mu_{j,i}$ and the normalization by dividing by the
square root of the number of $k$-mers depend upon the length
$n_{j,i}$.  In the special situation where $n_{j,i}=n$ for all $i,j$ ,
and hence $\mu_{j,i}=\mu$, this reduces to
$$
\tilde d  =\left (\frac1 {{n-k+1}} \right ) \frac 1B\sum_{i=1}^B  {\| x_{1,i} - x_{2,i} \|_{2}^{2}}   \approx \frac d {{n-k+1}} .$$
Comparing this estimate for $\tilde d$ to equation \eqref{eq:JCdist}, it is natural
to define a Jukes-Cantor $k$-mer distance $d_{JC}^{k,B}$, dependent on $k$ and $B$, by
\begin{equation}
d_{JC}^{k,B}= - \frac{3}{4}   \ln \left( \frac{4}{3}  
\sqrt[k]{1 - \frac{\tilde d}{2} }  - \frac{1}{3}  \right).\label{eq:JCkBdist}
\end{equation}
We use this formula extensively in the simulations whose results are presented in the next section.

In examining \eqref{eq:JCkBdist}, it is unclear \emph{a priori} which
values of $k$ and $B$ will yield the best estimate for $d_{JC}^{k,B}$.
In the particular case that sequences evolved without an indel
process, the lowest variance estimate of $d_{JC}^{k,B}$ is obtained by
taking the largest number $B$ of samples, i.e.~ each block has length
$k$ (the smallest possible length which allows $k$-mers to be
counted).  However, in the presence of an evolutionary indel process a
true alignment of sequences would contain gaps, and such short block
sizes would give poor results. For good performance, we need the $i$th
blocks in the two sequences to be composed mostly of orthologous
sites.  If the block size is small, this is unlikely to be true, as
even a mild indel process might result in orthologs residing in
different blocks.  The art is to find the right compromise between a
large number $B$ of blocks and a large enough length $n_{j,i}$ for
each block to ensure many orthologs.  Results of simulation studies in
the next section confirm this trade-off.

Using $1$-mer distributions and taking into account a particular model
of the indel process, \cite{Daskalakis2013} give a detailed analysis
of a distance method along the lines described here.  Their results
suggest that the block sizes should be of size roughly the square root
of total sequence length.  While the approach of Daskalakis and Roch
inspired our results, since our approach to a $k$-mer distance is
based on a model without indels, and our extension to a distance
formula for sequence evolution in the presence of indels is heuristic,
we can offer no such guidance.  A fruitful direction for future
research is to explore $k$-mer distances under some explicit model of
sequence evolution with indels.


\section{Simulation studies}\label{sec:simulation}

\subsection*{Methods}

We performed extensive simulations to attempt to understand how the distance formula
in \eqref{eq:JCkBdist} might work in practice and to compare distance
methods with $d_{JC}^{k,B}$ to other alignment-free methods
for reconstructing phylogenetic trees from sequence data. Data was simulated using
the sequence evolution simulator INDELible \citep{Fletcher2009},
which produces sequence data under standard 
base substitution models with or without an additional insertion and deletion process. 

All of our simulations use the Jukes-Cantor (JC) substitution model on
$4$-taxon tree.  We consider only trees in which two branch lengths
occur, $t_a$ and $t_b$, as shown in Figure \ref{figure:abtree}.  This
allows us to investigate performance over an important range of
parameter space, yet still display the success of an algorithm in an
easily-interpretable $2$-dimensional display, as introduced by
\citet{Huelsenbeck1995}.

The two branch lengths $t_a$ and $t_b$ each range over the interval
$(0, \infty)$, but are transformed to probabilities $a$ and $b$ in
range $(0, .75)$, probabilities that bases at a site differ at
opposite ends of a branch (see equation \ref{eq:transformed}).  In
this interval, we sampled points from $.01$ to $.73$, with increments
of $.02$, to get a $37 \times 37$ grid of transformed branch lengths.
For each choice of branch lengths we generated 100 sets of four
sequences, used a specific method to recover the tree topology, and
recorded the frequency the method under study reconstructs the correct
tree $12 \vert 34$ from the simulated data.

\begin{figure}[h] 
\begin{center}
\ifthenelse{\boolean{colour}}{

\resizebox{!}{3cm}{\includegraphics{abtree.eps}}
\quad 
\resizebox{!}{3cm}{\includegraphics[height=3cm]{figure_6_1_1_a18_M100_idr_05_kmer_3.eps}} 
\quad 
\resizebox{!}{3cm}{\includegraphics[height=3cm]{figure_6_1_2_a18_M100_idr_05_nj.eps}}
\, 
\resizebox{!}{!}{\includegraphics[height=3cm]{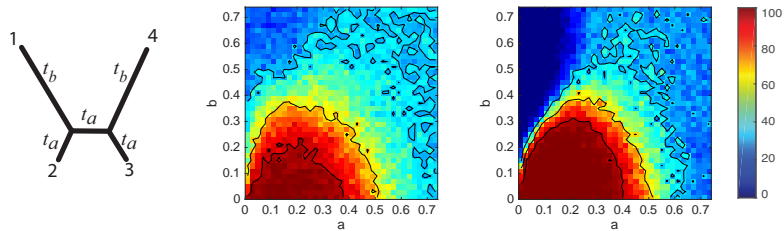}}
}{
\resizebox{!}{3cm}{\includegraphics{abtree.eps}}
\quad 
\resizebox{!}{3cm}{\includegraphics[height=3cm]{figure_6_1_1_a18_M100_idr_05_kmer_3_BW.eps}} 
\quad 
\resizebox{!}{3cm}{\includegraphics[height=3cm]{figure_6_1_2_a18_M100_idr_05_nj_BW.eps}}
\, 
\resizebox{!}{!}{\includegraphics[height=3cm]{color_bar_BW.eps}}
}
\caption{\label{figure:abtree}Figure on the left displays the model tree used for simulations.
The middle and right figures are representative Huelsenbeck diagrams for some (unspecified)
methods of inference. 
The horizontal axis is labeled by $a$ and the vertical one with $b$, both in the range $(0,.75)$
after transformation.  Contour lines are drawn at levels $.95$, $.67$, and $.33$.
The figure on the right suggests significant long branch attraction is present, as witnessed 
by the strong bias against the correct tree (much less than 33\% correct) along the upper left side. }
\end{center}
\end{figure}

The middle and the right plots in Figure \ref{figure:abtree} show
typical Huelsenbeck diagrams presenting results from such simulations.
The \ifthenelse{\boolean{colour}}{dark red}{white} regions correspond
to regions where the method reconstructs the true tree topology, with
split $12|34$, close to $100\%$ of the time.
\ifthenelse{\boolean{colour}}{Dark blue}{Black} regions are regions
where inference is strongly biased against the correct tree,
reconstructing it close to $0\%$ of the time.
\ifthenelse{\boolean{colour}}{Light blue}{Dark grey} corresponds to a
method constructing the true tree correctly about $33\%$ of the time;
that is,~the method is indistinguishable from the process of randomly
picking the tree topology to return.  For any phylogenetic method
applied to simulated sequence data, one typically sees
\ifthenelse{\boolean{colour}}{light blue}{dark grey} in the upper
right of these figures ($a\approx b\gg 0$),
\ifthenelse{\boolean{colour}}{darker blue}{darker grey to black} in
the upper left ($b\gg a$) in the ``long-branch attraction" zone where
the tree with split $14|23$ tends to be inferred, and
\ifthenelse{\boolean{colour}}{red}{white} where $a\ge b$ are of small
to moderate size.

In our simulation studies parameters other than branch lengths 
were also varied.  Several of these govern
the details of the model of sequence evolution:
\begin{enumerate}
\item  Sequence length
\item The rates of insertions and deletions 
\item Parameters for the distribution of the size of indels
\end{enumerate}
Other parameters control the specifics of implementing our $k$-mer method:
\begin{enumerate}
\item[(4)] $k$, the $k$-mer length
\item[(5)]  $B$, the number of blocks
\end{enumerate}

For simulations that combine a site substitution process with an indel
process, one must specify the location of a root in the tree, since
indels change the sequence length; we chose the midpoint of the
interior branch to root the tree.  For initial sequence length at this
root, we chose the lengths $L = 1000, 10000$.  INDELible requires
users to choose a rate of insertion events and a rate of deletion
events, specified relative to the substitution rate; we set these
equal and denote the common value $\mu$.  In assuming that insertions
and deletions are rare relative to base substitutions, we varied this
parameter over the values $\mu = .01, .05, .1$.  We used the Lavalette
distribution as implemented in INDELible for determining the lengths
of inserted and deleted segments: For parameters $(a,M)$, this is the
distribution on $S=\{1,2, \ldots, M\}$ such that for $G \in S$, $Pr(G)
\propto \big( \frac{GM}{M - G +1} \big)^{-a}$.  Large $M$ and small
$a$ tend to produce longer insertion and deletion events.
\citet{Fletcher2009} suggest that values of $a \in[1.5,2]$ with a
large $M$ give a reasonable match with data.  We tried values $a =
1.1$ (as used in \citep{Chan2014}), $1.5$, $1.8$, and $M= 100$.

For testing our $k$-mer methods on simulated data, we varied $k = 1,3,5,7$ 
and the number of blocks $B$ ranged over $1,5, 25, 100, 250, 500$, provided this 
allowed a block size at least $k$.

\subsection*{Performance on simulated sequence data}

As presentation of all simulation results would require considerable
space, here we present only representative examples to illustrate key
points. The supplementary materials \citep{SuppMat} contain results of
other simulations\ifthenelse{\boolean{colour}}{}{, as well as color
  versions of the figures given here}.

\subsubsection*{Simulations with no  indel process.}

We begin by discussing simulations in which no indel process
occurs. This is the situation in which our theoretical results were
derived, and these runs investigate solely the effect of having
simulated sequence data of finite length.  These trials are, of
course, somewhat artificial in that in the absence of an indel process
we have exact alignments of sequences, and there is no reason to use
an alignment-free phylogenetic method.  Nonetheless, they represent a
measuring rod for evaluating the performance of the new methods
presented here.

We set the sequence length to 1000, and for comparison to traditional
approaches, produce Hulsenbeck diagrams in Figure
\ref{figure:nj_nogaps} using (i) the standard JC pairwise distance
formula for the sequences with the true alignment as produced by
INDELible together with Neighbor Joining (NJ), and (ii) the standard
JC distance formula after a pairwise alignment, followed by NJ.
Alignment in (ii) was performed by the Needleman-Wunsch algorithm
implemented in MATLAB's Bioinformatics Toolbox, but with scoring
parameters set to NCBI defaults: match$=2$, mismatch$=-3$, gap
existence$=-5$, gap extension$=-2$. Simulation (i) represents a
standard that would be desirable, but probably impossible, to match,
as $k$-mer methods make no use of the alignment itself and a true
alignment is never known in practice.  Simulation (ii) offers a more
realistic setting with results we might hope to match or beat, in
which large amounts of substitution results in quite dissimilar
sequences, and the introduction of gaps in the alignment process.  The
distance estimates computed with these `gappy' alignments can be quite
far from the true pairwise distances underlying the simulated data.

In Figure \ref{figure:nj_nogaps} (ii), for the simulation in which
sequences were aligned before distances were computed, there is a
rather pronounced region of parameter space to the upper left
displaying the phenomenon of long branch attraction.  In addition,
surrounding the \ifthenelse{\boolean{colour}}{red}{white} area where
the true tree is reliably constructed, we see a halo of
\ifthenelse{\boolean{colour}}{darkish blue}{dark grey}, illustrating
another region of parameter space with a weaker bias against the
correct tree.  Comparing (i) and (ii), it is clear that the alignment
process markedly degrades performance of the inference procedure.

\begin{figure}[h] 
\begin{center}
\ifthenelse{\boolean{colour}}{
 (i)
\resizebox{!}{3cm}{\includegraphics[height=3cm]{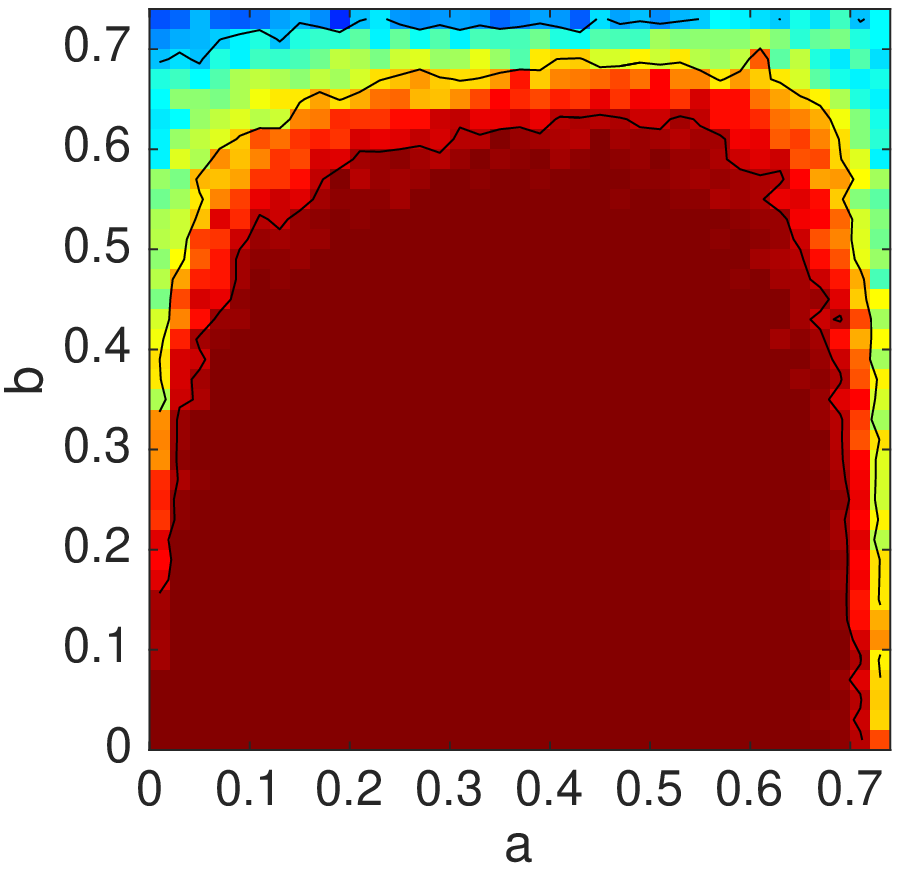}}
\quad (ii)
\resizebox{!}{3cm}{\includegraphics[height=3cm]{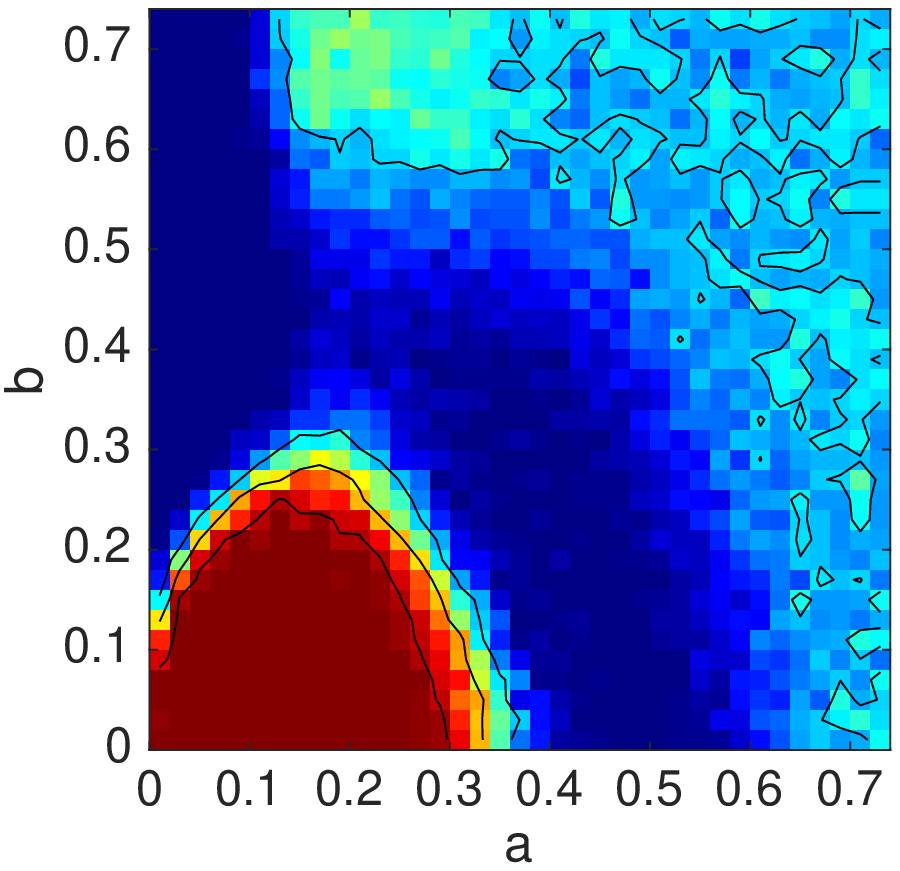}}
\, 
\resizebox{!}{!}{\includegraphics[height=3cm]{color_bar_C.eps}}
}{
 (i)
\resizebox{!}{3cm}{\includegraphics[height=3cm]{figure_6_2_true_align_nj_nogaps_1000_BW.eps}}
\quad (ii)
\resizebox{!}{3cm}{\includegraphics[height=3cm]{figure_6_2_align_nj_nogaps_1000_BW.eps}}
\, 
\resizebox{!}{!}{\includegraphics[height=3cm]{color_bar_BW.eps}}
}
\end{center}
\caption{\label{figure:nj_nogaps} Figures illustrating the accuracy of inference of tree 
topology on simulated data with no gaps, using the Jukes-Cantor distance and  
Neighbor Joining. Simulated sequences have length 1000 bp  with no indel process. 
In (i) the correct alignment is used, and in (ii) pairwise alignments are found before 
the JC distance is computed.}
\end{figure}

In Figure \ref{figure:kmer_nogaps_k5}, we present results using the
same simulated sequences (JC and no indels) as in the previous figure,
but use the distance $d_{JC}^{5,B}$ with NJ.  With $k=5$ held fixed,
we vary $B = 1, 5, 25, 100$.  This sequence of diagrams, in which the
\ifthenelse{\boolean{colour}}{red}{white} area increases with $B$,
illustrates that in the absence of indels and with $k$ held constant,
increasing the number of blocks is advantageous, as was anticipated in
Section \ref{sec:practical}.

Comparing Figure \ref{figure:kmer_nogaps_k5} with Figure
\ref{figure:nj_nogaps} (ii) suggests that when data sequences are
quite dissimilar, and a researcher might be inclined to align
sequences before a phylogenetic analysis, that our $k$-mer method can
outperform the traditional approach (alignment + $d_{JC}$ + NJ).  In
particular, using $d_{JC}^{k,B}$ the
\ifthenelse{\boolean{colour}}{red}{white} region of good performance
is enlarged, and the phenomenon of long branch attraction is
significantly lessened.  (Further simulations below will return to
this issue when there is a mild indel process.)

\begin{figure}[h] 
\begin{center} 
\ifthenelse{\boolean{colour}}
{
\resizebox{!}{3cm}{\includegraphics{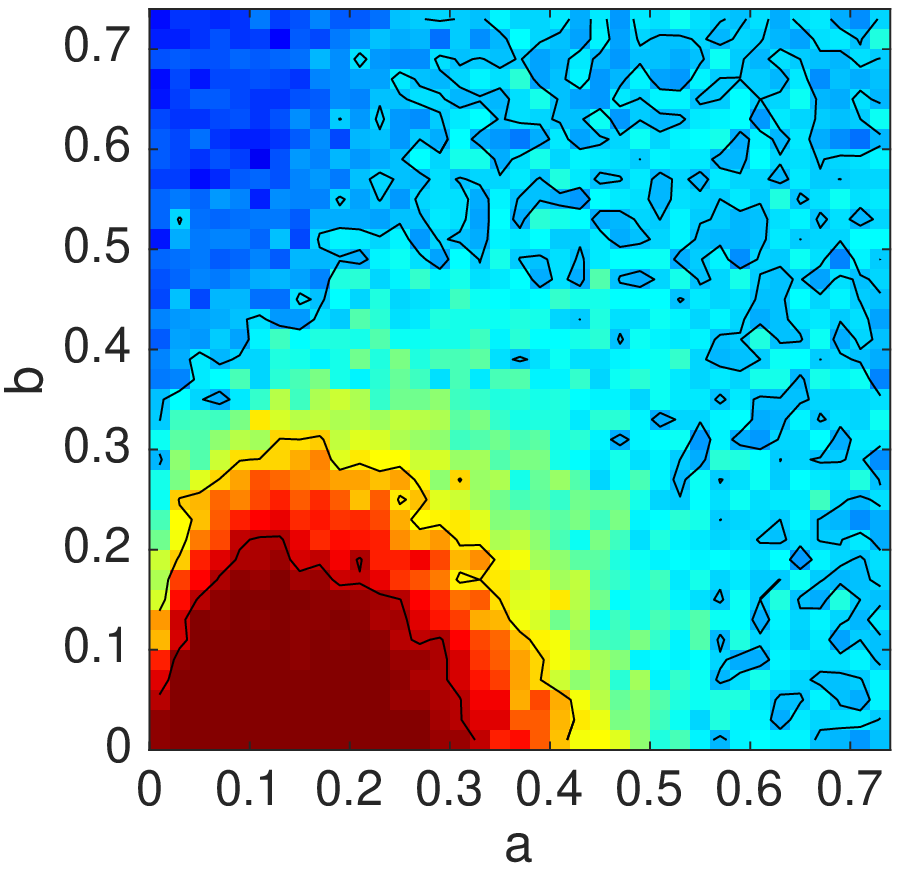}}
\resizebox{!}{3cm}{\includegraphics{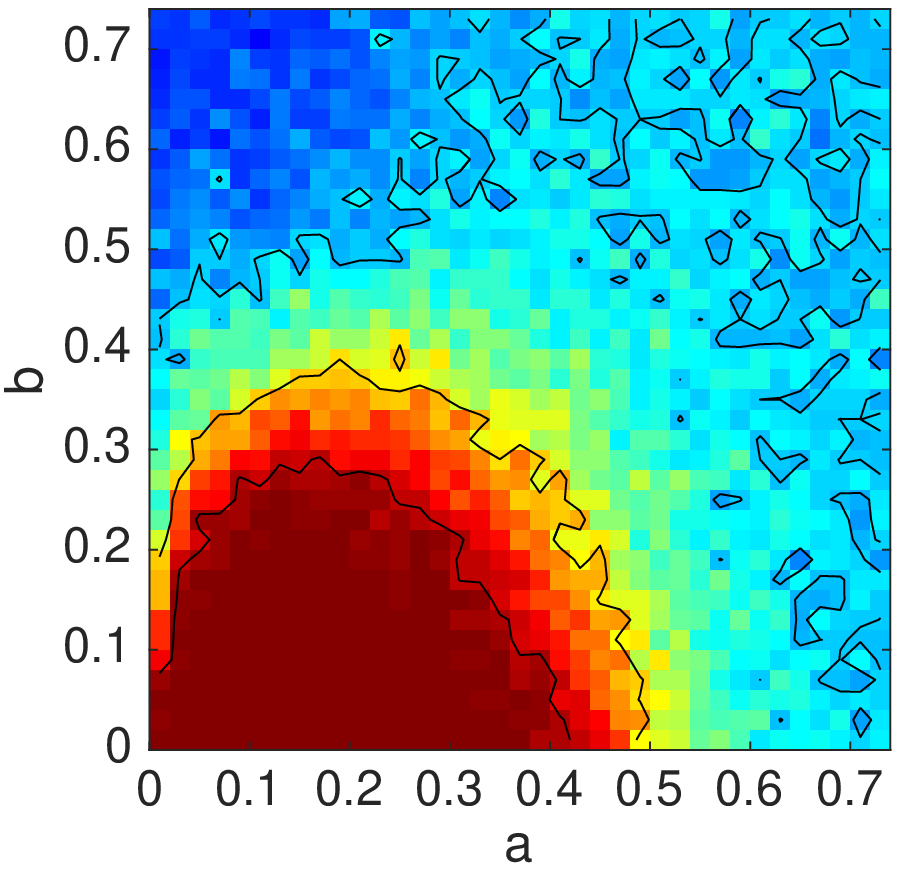}}
\resizebox{!}{3cm}{\includegraphics{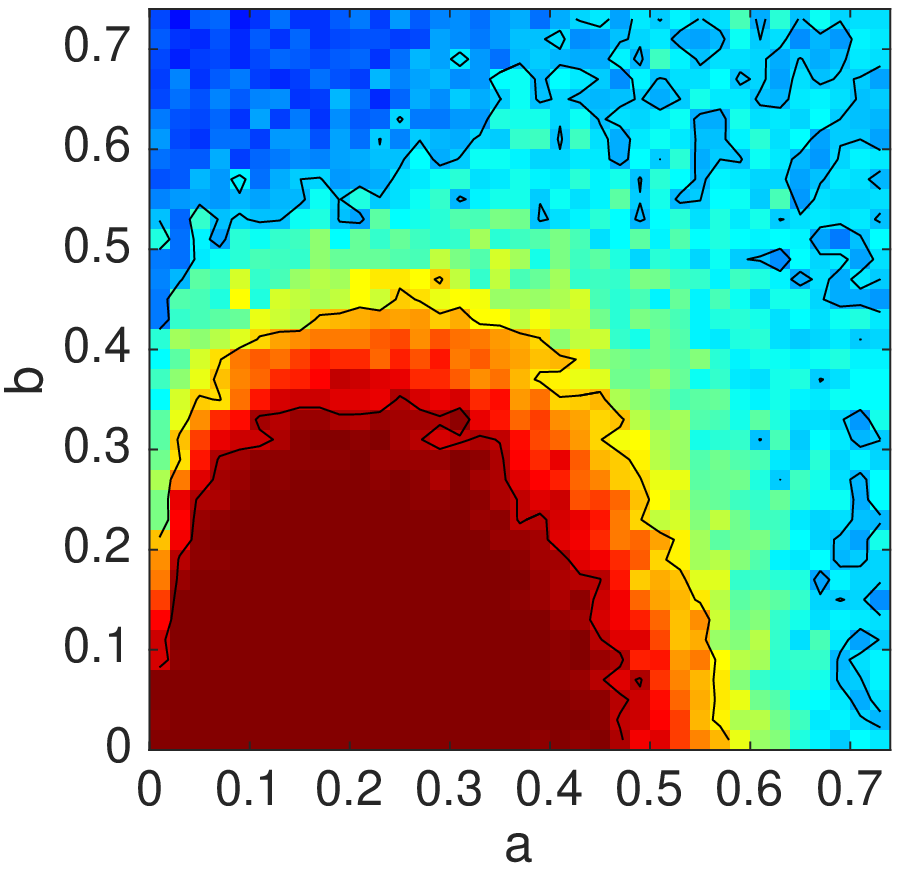}}
\resizebox{!}{3cm}{\includegraphics{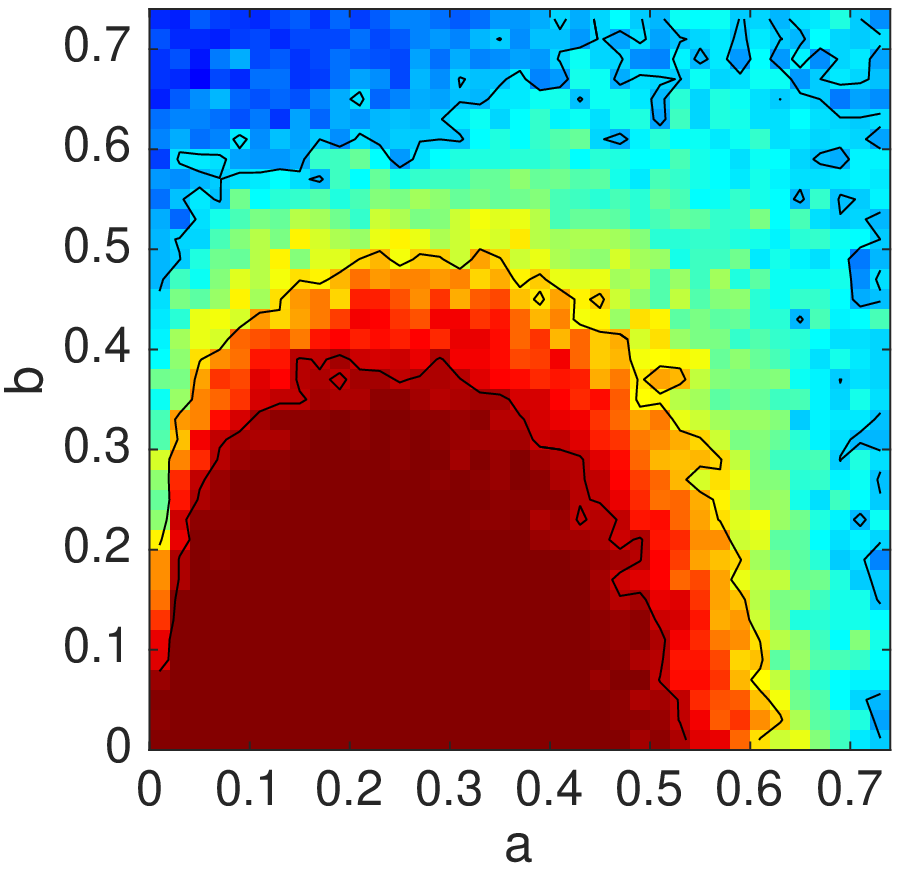}}
\resizebox{!}{!}{\includegraphics[height=3cm]{color_bar_C.eps}}
}{ 
\resizebox{!}{3cm}{\includegraphics{figure_6_3_k5_B1_L1000_nogaps_BW.eps}}
\resizebox{!}{3cm}{\includegraphics{figure_6_3_k5_B5_L1000_nogaps_BW.eps}}
\resizebox{!}{3cm}{\includegraphics{figure_6_3_k5_B25_L1000_nogaps_BW.eps}}
\resizebox{!}{3cm}{\includegraphics{figure_6_3_k5_B100_L1000_nogaps_BW.eps}}
\resizebox{!}{!}{\includegraphics[height=3cm]{color_bar_BW.eps}}
}
\end{center}
\caption{\label{figure:kmer_nogaps_k5}
Figures illustrating the accuracy of inference of 
tree topology on simulated data with no indels, using a 5-mer distance 
$d_{JC}^{5,B}$ and Neighbor Joining. Simulated sequences 
have length 1000 bp  with no indel process. From left to right, $B=1$, $5$, $25$, $100$.}
\end{figure}

Now fixing the number of blocks $B=25$, but varying $k$ in $d_{JC}^{k,25}$, with 
NJ we produce Figure \ref{figure:fixB_varyk_nogaps}. 
Notice here that with a fixed number of blocks, 
both too small and too large a value of $k$ reduces performance.

\begin{figure}[h]
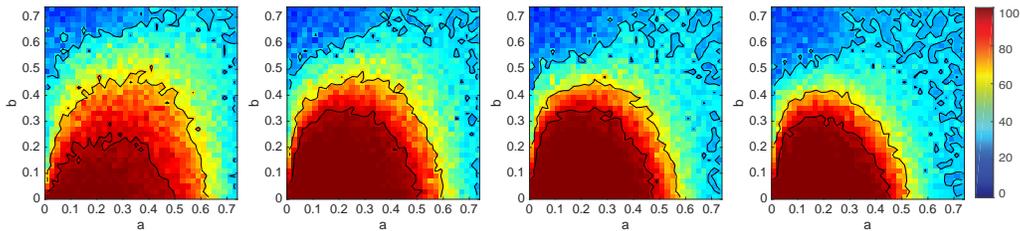
 
\begin{center} 
\ifthenelse{\boolean{colour}}{
\resizebox{!}{3cm}{\includegraphics{figure_6_4_k1_B25_L1000_nogaps.eps}}
\resizebox{!}{3cm}{\includegraphics{figure_6_4_k3_B25_L1000_nogaps.eps}}
\resizebox{!}{3cm}{\includegraphics{figure_6_4_k5_B25_L1000_nogaps.eps}}
\resizebox{!}{3cm}{\includegraphics{figure_6_4_k7_B25_L1000_nogaps.eps}}
\resizebox{!}{!}{\includegraphics[height=3cm]{color_bar_C.eps}}
}{
\resizebox{!}{3cm}{\includegraphics{figure_6_4_k1_B25_L1000_nogaps_BW.eps}}
\resizebox{!}{3cm}{\includegraphics{figure_6_4_k3_B25_L1000_nogaps_BW.eps}}
\resizebox{!}{3cm}{\includegraphics{figure_6_4_k5_B25_L1000_nogaps_BW.eps}}
\resizebox{!}{3cm}{\includegraphics{figure_6_4_k7_B25_L1000_nogaps_BW.eps}}
\resizebox{!}{!}{\includegraphics[height=3cm]{color_bar_BW.eps}}
}
\end{center}
\caption{\label{figure:fixB_varyk_nogaps} Figures illustrating the accuracy of 
inference of tree topology 
on data with no indels, using a $k$-mer distance $d_{JC}^{k,25}$
and Neighbor Joining.  Simulated sequences have length 1000 bp  with no indel process. 
From left to right, $k=1$, $3$, $5$, $7$.}
\end{figure}

In summary, while no performance of our $k$-mer distance comes close to the ideal  
of Figure \ref{figure:nj_nogaps} (i) (true alignment+$d_{JC}$+NJ), 
the $k$-mer methods often perform better than (alignment+$d_{JC}$+NJ) as shown in
Figure \ref{figure:nj_nogaps} (ii). 
Computing erroneous pairwise alignments results in a large region of parameter space in
which long branch attraction is pronounced, but
such biased inference is almost absent when $d_{JC}^{k,B}$ is used.
When the sequence length and number of blocks $B$ are fixed, the choice of $k$ can affect  performance,
with either too large or to small a $k$ causing degradation. 
It is unclear how to determine  an ``optimal'' choice of $k$ except through simulation.

\subsubsection*{Simulations with an indel process.}

With a length of 1000 bp for the sequence at the root of the tree, we now introduce an indel 
process with rate $\mu=.05$ and Lavalette parameters $a=1.8$, $M=100$. 
This means on average one insertion event and one deletion event occurs for every 
20 base substitutions.  
Repeating reconstruction methods (i) and (ii) of Figure \ref{figure:nj_nogaps}
on these datasets with indels, we obtain Figure \ref{figure:nj_gaps}.

\begin{figure}[h] 
\begin{center} 
\ifthenelse{\boolean{colour}}{ (i)
\resizebox{!}{3cm}{\includegraphics{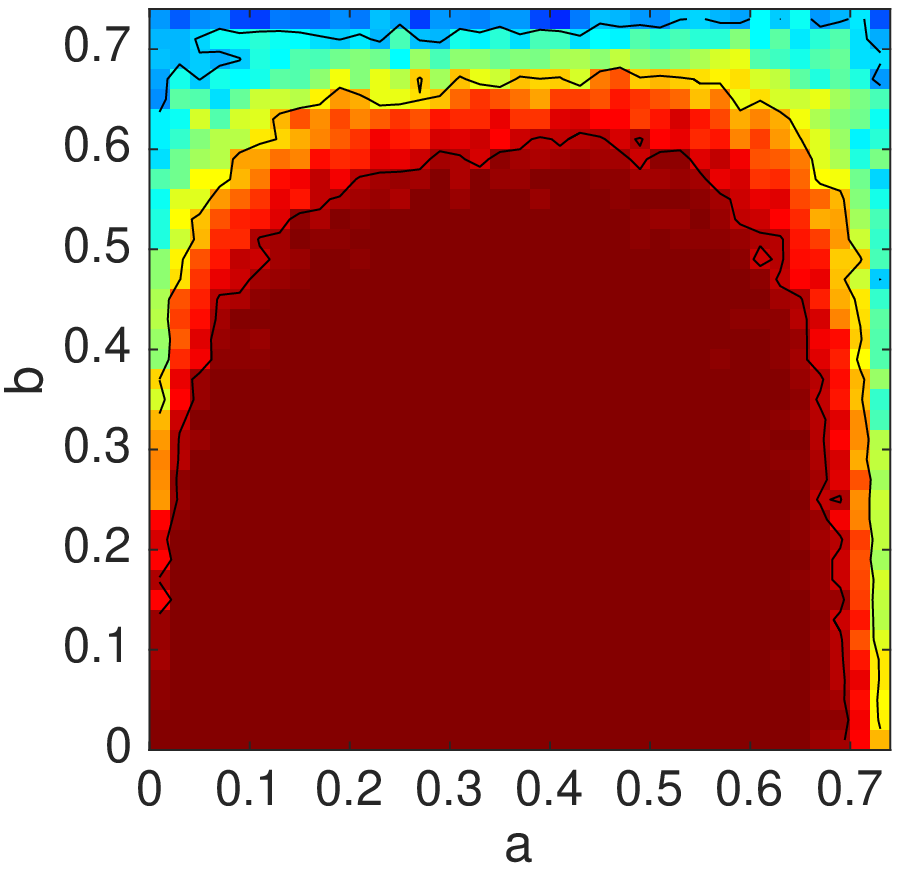}} 
\quad (ii)
\resizebox{!}{3cm}{\includegraphics{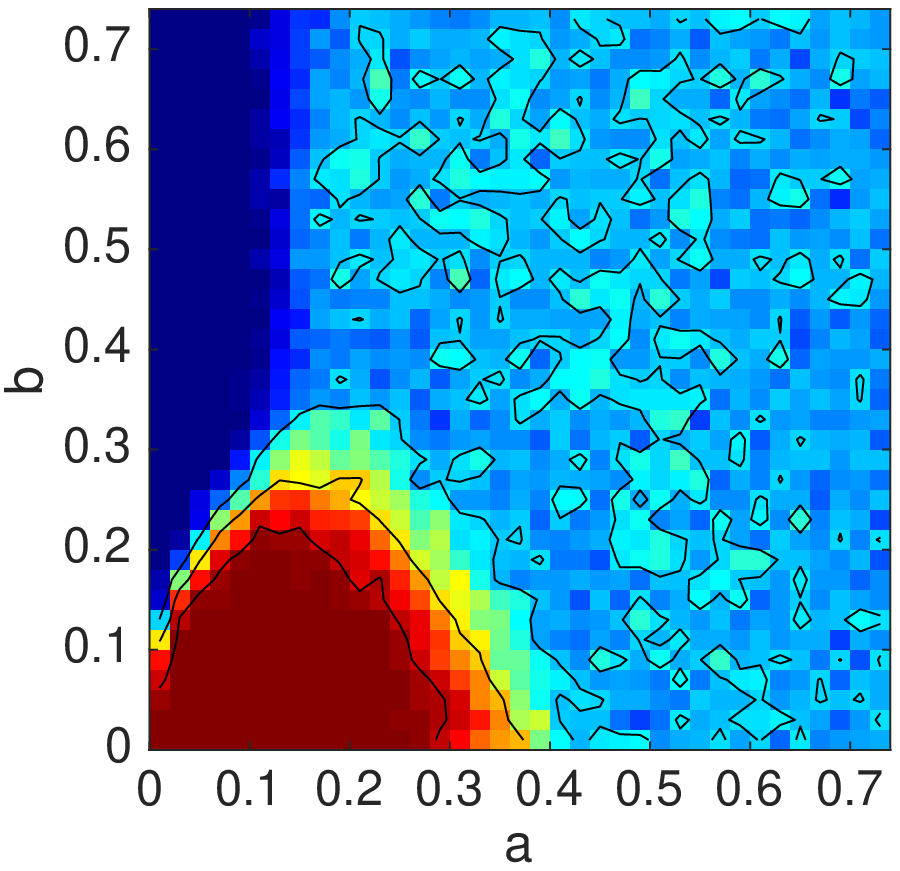}} 
\quad
\resizebox{!}{!}{\includegraphics[height=3cm]{color_bar_C.eps}}
}
{ (i)
\resizebox{!}{3cm}{\includegraphics{figure_6_5_tr_a1_8_idr0_05_gaps_BW.eps}} 
\quad (ii)
\resizebox{!}{3cm}{\includegraphics{figure_6_5_nj_a1_8_idr0_05_gaps_BW.eps}} 
\quad
\resizebox{!}{!}{\includegraphics[height=3cm]{color_bar_BW.eps}}
}
\end{center}
\caption{\label{figure:nj_gaps} Figures illustrating the accuracy of inference of tree topology 
on simulated data with indels using the Jukes-Cantor distance
and Neighbor Joining.  The root sequence is 1000 bp.  The indel process 
is determined by $\mu=.05$ and Lavalette parameters $a=1.8$, $M=100$. 
In (i) the true alignment is used, and in (ii) pairwise alignments are 
found before the JC distance is computed.}
\end{figure}

While Figure \ref{figure:nj_gaps} (i) shows excellent performance, it
assumes the correct alignment (including gaps) is known, which is
unrealistic in any empirical study.  Analysis (ii) is one that could
be performed on real data, and should be compared to Figure
\ref{figure:nj_nogaps} (ii) above.  For sequence data with indels the
region of good performance is similarly shaped, but smaller, than that
for data without indels.  This is to be expected, since even when few
substitutions occur, indels could lead to erroneous alignment.  In
both Figure \ref{figure:nj_nogaps} (ii) and Figure
\ref{figure:nj_gaps} (ii), long branch attraction is present in the
upper left corner of parameter space.  In contrast, however, in Figure
\ref{figure:nj_gaps} (ii) the area to the upper right surrounding the
area of good reconstruction does not display a bias against correct
reconstruction, but rather a uniform randomness in selection of the
tree.

Setting $k=5$ and $B = 1, 5, 25, 100$, and using $d_{JC}^{5,B}$+NJ on
the sequence data with indels produces Figure
\ref{figure:kmer_gaps_k5}.  Note that increasing the number of blocks
first improves performance, but then degrades it. This is explained by
a large number of blocks producing a small block size, which increases
the chance that corresponding blocks in two sequences share few
homologous sites, as was discussed in Section \ref{sec:practical}.
This phenomenon is only seen on data simulated with an indel process.
 
\begin{figure}[h]
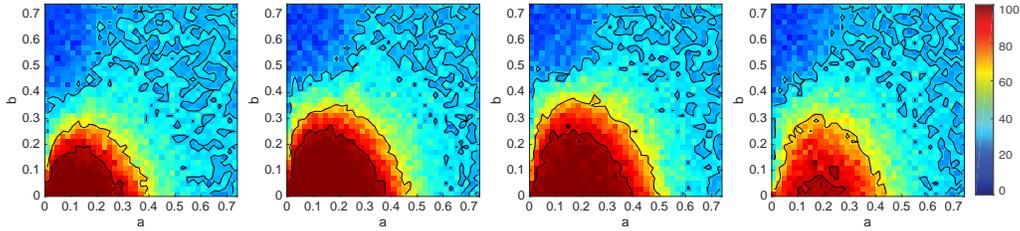
 
\begin{center} 
\ifthenelse{\boolean{colour}}
{
\resizebox{!}{3cm}{\includegraphics{figure_6_6_k5_B1_L1000_a18_idr05.eps}}
\resizebox{!}{3cm}{\includegraphics{figure_6_6_k5_B5_L1000_a18_idr05.eps}}
\resizebox{!}{3cm}{\includegraphics{figure_6_6_k5_B25_L1000_a18_idr05.eps}}
\resizebox{!}{3cm}{\includegraphics{figure_6_6_k5_B100_L1000_a18_idr05.eps}}
\resizebox{!}{!}{\includegraphics[height=3cm]{color_bar_C.eps}}
}{
\resizebox{!}{3cm}{\includegraphics{figure_6_6_k5_B1_L1000_a18_idr05_BW.eps}}
\resizebox{!}{3cm}{\includegraphics{figure_6_6_k5_B5_L1000_a18_idr05_BW.eps}}
\resizebox{!}{3cm}{\includegraphics{figure_6_6_k5_B25_L1000_a18_idr05_BW.eps}}
\resizebox{!}{3cm}{\includegraphics{figure_6_6_k5_B100_L1000_a18_idr05_BW.eps}}
\resizebox{!}{!}{\includegraphics[height=3cm]{color_bar_BW.eps}}
}
\end{center}
\caption{\label{figure:kmer_gaps_k5} Figures illustrating the accuracy of 
inference of tree topology on simulated data with 
indels, using a 5-mer distance $d_{JC}^{5,B}$ and Neighbor Joining. 
The root sequence is 1000 bp.  The indel process 
is determined by $\mu=.05$ and Lavalette parameters $a=1.8$, $M=100$. 
From left to right, $B=1$, $5$, $25$, $100$.}
\end{figure}

With the number of blocks set at 25, but varying $k$ in $d_{JC}^{k,25}$, we obtain 
Figure \ref{figure:fixB_varyk_gaps}. 
Again we note that for a fixed number of blocks, too small or large a value of
$k$ degrades performance.

\begin{figure}[h]
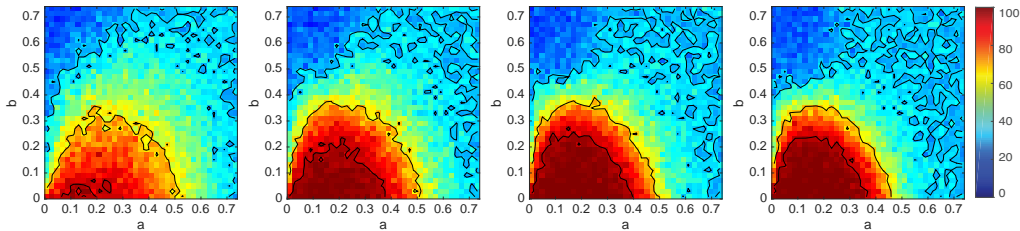
 
\begin{center} 
\ifthenelse{\boolean{colour}}
{
\resizebox{!}{3cm}{\includegraphics{figure_6_7_k1_B25_L1000_a18_idr05_gaps.eps}}
\resizebox{!}{3cm}{\includegraphics{figure_6_7_k3_B25_L1000_a18_idr05_gaps.eps}}
\resizebox{!}{3cm}{\includegraphics{figure_6_7_k5_B25_L1000_a18_idr05_gaps.eps}}
\resizebox{!}{3cm}{\includegraphics{figure_6_7_k7_B25_L1000_a18_idr05_gaps.eps}}
\resizebox{!}{!}{\includegraphics[height=3cm]{color_bar_C.eps}}
}{
\resizebox{!}{3cm}{\includegraphics{figure_6_7_k1_B25_L1000_a18_idr05_gaps_BW.eps}}
\resizebox{!}{3cm}{\includegraphics{figure_6_7_k3_B25_L1000_a18_idr05_gaps_BW.eps}}
\resizebox{!}{3cm}{\includegraphics{figure_6_7_k5_B25_L1000_a18_idr05_gaps_BW.eps}}
\resizebox{!}{3cm}{\includegraphics{figure_6_7_k7_B25_L1000_a18_idr05_gaps_BW.eps}}
\resizebox{!}{!}{\includegraphics[height=3cm]{color_bar_BW.eps}}
}
\end{center}
\caption{\label{figure:fixB_varyk_gaps} Figures illustrating the accuracy of 
inference of tree topology on simulated data with indels
using a k-mer distance $d_{JC}^{k,25}$ and Neighbor Joining. 
The root sequence is 1000 bp.  The indel process 
is determined by $\mu=.05$ and Lavalette parameters $a=1.8$, $M=100$. 
From left to right, $k =1, 3, 5, 7$.}
\end{figure}

These figures illustrate that even in the presence of a mild indel
process, the $k$-mer method described here can perform as well as
pairwise alignment with traditional distance methods in the regions of
parameter space where those work well, yet greatly reduce the
pronounced long-branch attraction problems that incorrect alignements
introduce in other regions of parameter space.  Although the $k$-mer
distance $d_{JC}^{k,B}$ was derived using a model with no indels,
these simulations demonstrate its performance is somewhat robust to
violation of that assumption.

\subsubsection*{Other $k$-mer methods}

To conclude, in Figure \ref{figure:other_kmer} we display some
diagrams that illustrate the performance of other $k$-mer distance
methods \citep{Vinga2003, Chan2014, Reinert2009, Wan2010, Edgar2004b}
on simulated data with indels.  The datasets were the same ones used
in producing Figures \ref{figure:nj_gaps}, \ref{figure:kmer_gaps_k5},
\ref{figure:fixB_varyk_gaps}.

In the figure below, we use $k$-mer distances previously proposed:
With $x_1$ and $x_2$ the observed $k$-mer count vectors, two of these
distances are
\begin{align}
L_2^2&=\|x_1-x_2\|^2_2\label{eq:L22}\\
{\theta}&= \arccos(x_1\cdot x_2/ \|x_1\|_2\|x_2\|_2)\label{eq:angle}
\end{align}
These have long been studied for sequence comparison \citep{Vinga2003}, 
though primarily for non-phylogenetic applications.  \citet{Yang2008} used
a variation of the $L_2^2$ distance based on replacing $x_1$ and $x_2$ with
$x_1/(n_1 - k +1)$ and $x_2/(n_2 - k+1)$, respectively, where $n_i$ is the length of sequence $i$.

The next three have appeared in phylogenetic investigations of \citet{Chan2014}, 
but are based on sequence comparison methods developed for other purposes, as 
reviewed by \citet{Song2013}. 
With $\tilde x_i=x_i-\EE(x_i)$ the centralized count vector, let
\begin{align*}
D_2(x_1, x_2) &=x_1\cdot x_2,\\
D_2^S(x_1,x_2)&=\sum_W \frac {\tilde x_1^W \tilde x_2^W} {\sqrt{(\tilde x_1^W)^2 +(\tilde x_2^W)^2}},\\
D_2^*(x_i,x_2)&=\sum_W \frac {\tilde x_1^W \tilde x_2^W}{\sqrt{\EE(x_1^W)\EE(x_2^W)}},
\end{align*}
as did  \citet{Reinert2009} and \citet{Wan2010}. Then define the distances
\begin{align}
d_2&= \left | \ln \frac {D_2(x_1,x_2)} {\sqrt{D_2(x_1,x_1)D_2(x_2,x_2)} } \right | ,\label{eq:d2}\\
d_2^S&= \left | \ln \frac {D_2^S(x_1,x_2)} {\sqrt{D_2^S(x_1,x_1)D_2^S(x_2,x_2)} } \right | ,\label{eq:d2S}\\
d_2^*&= \left | \ln \frac {D_2^*(x_1,x_2)} {\sqrt{D_2^*(x_1,x_1)D_2^*(x_2,x_2)} } \right |,\label{eq:d2*}
\end{align}
with the convention that the logarithm of a negative number is set to
$\infty$.  As the distances $d_2$ and $\theta$ differ from each other
by the application of a monotone function, for $4$-leaf trees they
perform identically using UPGMA, and quite similarly with NJ.  Thus,
in Figure \ref{figure:other_kmer} the plot for the $\theta$ distance
is not shown.

Finally, for comparison purposes, we include the distance used in the
initial step of the MUSCLE alignment algorithm \citep{Edgar2004b},
\begin{equation}
 m = 1 - \sum_W \frac{\min\{x_1^W, x_2^W\}}{(n-k+1)},\label{eq:MUSCLE}
 \end{equation}
 where $n=\min(n_1,n_2)$ is the length of the shorter of the two
 sequences. Since MUSCLE uses UPGMA as its default for tree building,
 we performed both NJ and UPGMA for all of these distances.

As is apparent in in Figure \ref{figure:other_kmer}, with $k=5$ most
of the distances in (\ref{eq:L22}-\ref{eq:MUSCLE}) exhibit long branch
attraction bias which is generally quite pronounced, and fail to match
the performance of the $5$-mer distance derived here.

\begin{figure}[h]
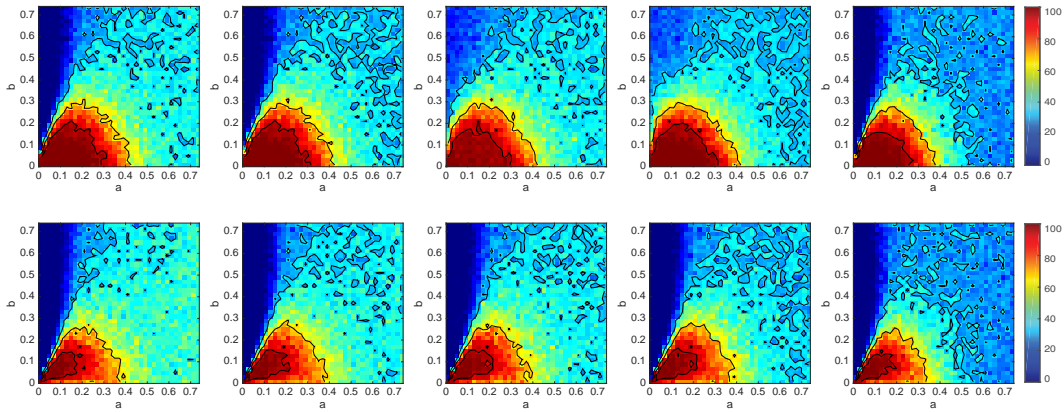
 
\begin{center} 
\ifthenelse{\boolean{colour}}
{
\resizebox{!}{2.5cm}{\includegraphics{figure_6_8_k5_L1000_a18_idr05_gaps_ltwo_nj.eps}} 
\resizebox{!}{2.5cm}{\includegraphics{figure_6_8_k5_L1000_a18_idr05_gaps_d2_nj.eps}} 
\resizebox{!}{2.5cm}{\includegraphics{figure_6_8_k5_L1000_a18_idr05_gaps_d2s_nj.eps}} 
\resizebox{!}{2.5cm}{\includegraphics{figure_6_8_k5_L1000_a18_idr05_gaps_d2star_nj.eps}} 
\resizebox{!}{2.5cm}{\includegraphics{figure_6_8_k5_L1000_a18_idr05_gaps_mus_nj.eps}} 
\resizebox{!}{!}{\includegraphics[height=2.5cm]{color_bar_C.eps}}

\bigskip

\resizebox{!}{2.5cm}{\includegraphics{figure_6_8_k5_L1000_a18_idr05_gaps_ltwo_upgma.eps}} 
\resizebox{!}{2.5cm}{\includegraphics{figure_6_8_k5_L1000_a18_idr05_gaps_d2_upgma.eps}} 
\resizebox{!}{2.5cm}{\includegraphics{figure_6_8_k5_L1000_a18_idr05_gaps_d2s_upgma.eps}} 
\resizebox{!}{2.5cm}{\includegraphics{figure_6_8_k5_L1000_a18_idr05_gaps_d2star_upgma.eps}}
\resizebox{!}{2.5cm}{\includegraphics{figure_6_8_k5_L1000_a18_idr05_gaps_mus_upgma.eps}}  
\resizebox{!}{!}{\includegraphics[height=2.5cm]{color_bar_C.eps}}
}{
\resizebox{!}{2.5cm}{\includegraphics{figure_6_8_k5_L1000_a18_idr05_gaps_ltwo_nj_BW.eps}} 
\resizebox{!}{2.5cm}{\includegraphics{figure_6_8_k5_L1000_a18_idr05_gaps_d2_nj_BW.eps}} 
\resizebox{!}{2.5cm}{\includegraphics{figure_6_8_k5_L1000_a18_idr05_gaps_d2s_nj_BW.eps}} 
\resizebox{!}{2.5cm}{\includegraphics{figure_6_8_k5_L1000_a18_idr05_gaps_d2star_nj_BW.eps}} 
\resizebox{!}{!}{\includegraphics[height=2.5cm]{color_bar_BW.eps}}
\resizebox{!}{2.5cm}{\includegraphics{figure_6_8_k5_L1000_a18_idr05_gaps_mus_nj_BW.eps}} 

\bigskip

\resizebox{!}{2.5cm}{\includegraphics{figure_6_8_k5_L1000_a18_idr05_gaps_ltwo_upgma_BW.eps}} 
\resizebox{!}{2.5cm}{\includegraphics{figure_6_8_k5_L1000_a18_idr05_gaps_d2_upgma_BW.eps}} 
\resizebox{!}{2.5cm}{\includegraphics{figure_6_8_k5_L1000_a18_idr05_gaps_d2s_upgma_BW.eps}} 
\resizebox{!}{2.5cm}{\includegraphics{figure_6_8_k5_L1000_a18_idr05_gaps_d2star_upgma_BW.eps}} 
\resizebox{!}{2.5cm}{\includegraphics{figure_6_8_k5_L1000_a18_idr05_gaps_mus_upgma_BW.eps}} 
\resizebox{!}{!}{\includegraphics[height=2.5cm]{color_bar_BW.eps}}
}
\end{center}
\caption{\label{figure:other_kmer} Figures illustrating the accuracy of 
inference of tree topology on simulated data with 
indels, using a variety of distances and Neighbor Joining and UPGMA.
The root sequence is 1000 bp.  The indel process 
is determined by $\mu=.05$ and Lavalette parameters $a=1.8$, $M=100$. 
Columns in the figure are, from left to right, obtained using the distances given in 
equations \eqref{eq:L22},  \eqref{eq:d2}, \eqref{eq:d2S}, \eqref{eq:d2*}, and  \eqref{eq:MUSCLE}, all with $k=5$.
The top row of figures uses Neighbor Joining, and the bottom UPGMA.}
\end{figure}


\section{Conclusions}

We have derived model-based distance corrections for the
squared-Euclidean distance between $k$-mer count vectors of sequences.
Our results show that the uncorrected use of the squared-Euclidean
distance leads to statistically inconsistent estimation of the tree
topology, with inherent long-branch attraction problems.  This
statistical inconsistency occurs even at short branch lengths, and is
strongly manifested in simulations.  Simulations show that our
corrected distance outperforms previously proposed $k$-mer methods,
and suggest that many of those are statistically inconsistent with
long-branch attraction biases.

All our results have been derived under the assumption that there are
no insertions or deletions in the evolution of sequences.  Our
simulations indicate that even if a mild indel process occured, a
simple extension of the corrected method still performs well.  It
remains to develop $k$-mer methods assuming an indel process, using
the indel model structure to develop a more precise correction on the
distance.

\citet{Daskalakis2013} developed an alignment-free phylogenetic tree
inference method for a model with a simple indel process.  Their
method can be seen as a $1$-mer method.  While we have not compared
their method directly to any of ours, our simulations suggest that
$1$-mer methods perform poorly compared to $k$-mer methods with larger
$k$.  This suggests that a natural line for future research would be
to combine the approach of Daskalakis and Roch with ours to develop
consistent $k$-mer methods that take into account the structure of an
underlying indel model.

\section*{Acknowledgement}
\label{sec:acknowledgement}
Seth Sullivant was partially supported by the David and Lucille
Packard Foundation and the US National Science Foundation (DMS
0954865).  All the authors thank UAF's Arctic Regional Supercomputing
Center and their staff for use of their cluster and help with parallel
implementation of the simulations.

\bibliography{kmer}
\bibliographystyle{plainnat}


\section*{Appendix A:  Proofs for Section \ref{sec:formulas}} \label{app:A}

Here we establish Theorem \ref{thm:gentrace}.
\smallskip

Recalling notation from Section  \ref{sec:formulas}, $\pi=(\pi^w)_{w\in L}$ is the stationary distribution for $M$, an  $L\times L$ Markov matrix describing the single-site state change process from sequence $S_1$ to sequence $S_2$. The probability of a $k$-mer $W=w_1w_2\ldots w_k\in [L]^k$ in any $k$ consecutive sites of either single sequence is $\pi^{W} =\prod_{j=1}^k \pi^{w_j}$.

Then $P=\diag(\pi)M$ is the joint distribution of states in aligned sites of the two sequences. We can alternately  view the state changes  from $S_2$ to $S_1$ as described by the Markov matrix  $N=\diag(\pi)^{-1}P^{\operatorname{T}}$, where $\operatorname{T}$ denotes transpose. For future use, note that $\tr M=\tr N$, where $\tr$ denotes the trace.

Let $X_{\ell i}^W$ be an indicator variable for the occurrence of a $k$-mer $W$ in sequence $\ell=1,2$ starting at position $i$.
Then $X_\ell^W=\sum_{i=1}^{n-k+1} X_{\ell i}^W$ is the count of occurrences of $k$-mer $W$ in sequence $\ell$, and
$X_{\ell}=(X_\ell^ W)_{W\in [L]^k}$ is the random vector of $k$-mer counts in the sequence.

Let $Z_i^W=X_{1i}^W-X_{2i}^W$ and $ Z^W=\sum_{i=1}^{n-k+1} Z_i^W=X_1^W-X_2^W$. These random variables have mean 0.

\begin{prop} \label{prop:cov} 
For  $i\ne j$,
$$\sum_{W\in[L]^k} \frac 1{\pi^W} \Cov[Z_i^W,Z_j^W]=0.$$
\end{prop}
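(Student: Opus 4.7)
The plan is to write $\Cov[Z_i^W, Z_j^W] = \EE[Z_i^W Z_j^W]$ (since each $Z_i^W$ has mean zero), expand this into four mixed second moments, evaluate each in terms of $\pi$ and $M$ using the i.i.d.-in-position structure within each sequence and the single-site coupling $P = \diag(\pi)M$ across sequences, and then show that after weighting by $1/\pi^W$ and summing over $W$ each of the four terms contributes exactly $1$, so the signed sum vanishes.

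Fix $i < j$ and set $m = j - i$. Write $A(W,m) = \EE[X_{1i}^W X_{1j}^W] = \EE[X_{2i}^W X_{2j}^W]$, $B(W,m) = \EE[X_{1i}^W X_{2j}^W]$, and $C(W,m) = \EE[X_{2i}^W X_{1j}^W]$, so that $\Cov[Z_i^W, Z_j^W] = 2A - B - C$. When $m \ge k$ the two $k$-mer windows use disjoint site sets, so independence of sites (at stationarity) makes $A = B = C = (\pi^W)^2$ and the covariance vanishes pointwise. When $1 \le m < k$ the windows overlap on $k-m$ sites; a site-by-site analysis gives that $A = 0$ unless $W$ satisfies the periodicity condition $w_{s+m} = w_s$ for $s = 1, \ldots, k-m$, in which case it equals a product of $\pi$-values, while $B$ and $C$ each split into $\pi$-factors at non-overlap sites and a product of $P_{ab} = \pi^a M_{ab}$ factors at overlap sites, with the two arguments to $M$ transposed between $B$ and $C$.

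The central observation is that the shift-by-$m$ structure in the overlap decouples the $k$ position indices into $m$ residue classes modulo $m$, so the sum over $W$ factors as a product over these classes. Within residue class $j$, the indices involved form a chain of $M$-factors weighted by a single $\pi$-factor at the chain's endpoint; telescoping inward via the stationarity identity $\sum_u \pi^u M_{uv} = \pi^v$ (for $B$) collapses it to $\sum_{w_j} \pi^{w_j} = 1$. For $C$ the chain runs in the opposite direction with the $\pi$-weight at the start, and telescoping uses row-stochasticity $\sum_v M_{uv} = 1$. For $A$, periodicity forces all variables in a given residue class to equal a common value $v_j$, and after combining $\pi^W$ with the $\pi$-weight arising in $A$ one is left with exactly one $\pi^{v_j}$ per class, again summing to $1$. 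Thus $\sum_W A/\pi^W = \sum_W B/\pi^W = \sum_W C/\pi^W = 1$, and the claim follows.

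The main obstacle is keeping the bookkeeping straight in the overlapping case: identifying which sites constrain which sequence(s) and getting the matrix arguments in the correct order for $B$ versus $C$. Organizing the whole computation by residue class modulo $m$ (which arises naturally from the shift-by-$m$ overlap) both handles this cleanly and unifies the three types of terms, reducing the entire claim to three evaluations of essentially the same simple chain sum.
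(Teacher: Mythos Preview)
Your argument is correct, and it takes a genuinely different route from the paper's proof. The paper proceeds by induction on $k$: writing $W = W'w$ with $W'\in[L]^{k-1}$ and $w\in[L]$, it shows that summing over the final letter $w$ (using $\delta(u,w)$ for the same-sequence terms and the row sums of $M$ for the cross terms) reduces $\sum_{W}\frac{1}{\pi^W}\Cov[Z_i^W,Z_j^W]$ to the corresponding $(k-1)$-mer sum, which vanishes by the inductive hypothesis. No decomposition by residue classes and no explicit closed form for the individual weighted sums appear.

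Your direct approach instead evaluates each of $\sum_W A/\pi^W$, $\sum_W B/\pi^W$, $\sum_W C/\pi^W$ in closed form and shows each equals $1$, so that $2A-B-C$ sums to zero. The residue-class-mod-$m$ factorization is the right organizing device: it explains exactly why the overlapping-window structure decouples, and it makes transparent which model properties are used where (stationarity $\sum_u \pi^u M_{uv}=\pi^v$ to telescope one cross term, row-stochasticity $\sum_v M_{uv}=1$ for the other, and the periodicity constraint for $A$). The paper's induction is shorter and avoids the chain bookkeeping; your computation is more explicit and yields the slightly sharper statement that each of the three weighted sums is exactly $1$, not merely that their signed combination cancels. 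Both arguments require only that $\pi$ is stationary for $M$ and that $M$ is stochastic, so neither is more general than the other.
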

\begin{proof} We may assume $i<j$.
If $j-i\ge k$, the variables $Z_i^W$, $Z_j^W$ depend on disjoint sets of sites, hence are independent.
This is all that is needed for $k=1$.

We now proceed by induction, assuming the result holds for $(k-1)$-mers, and considering only cases with $0<j-i<k$.
Writing a $k$-mer $W$ as a $(k-1)$-mer $W'$ followed by a 1-mer $w$, so that $\pi^W=\pi^{W'}\pi^{w}$, we have

\begin{align}
\sum_{W\in[L]^k} \frac 1{\pi^W} \Cov[Z_i^W,Z_j^W]&=\notag\\
\sum_{W'\in[L]^{k-1}}\frac 1{\pi^{W'}} & \sum_{w\in [L]} \frac1{\pi^{w}} \EE[Z_i^{W'w}Z_j^{W'w}]\notag\\
=\sum_{W'\in[L]^{k-1}}\frac 1{\pi^{W'}} &\sum_{w\in [L]} \frac1{\pi^{w}} \EE[X_{1i}^{W'w}X_{1j}^{W'w}-X_{1i}^{W'w}X_{2j}^{W'w} -X_{2i}^{W'w}X_{1j}^{W'w}+X_{2i}^{W'w}X_{2j}^{W'w}]\label{eq:cov}
\end{align}

Now since $j-i<k$,
$$X_{1i}^{W'w}X_{1j}^{W'w}=X_{1i}^{W'} X_{1j}^{W'}X_{1(i+k-1)}^w X_{1(j+k-1)}^w$$ 
and 
$$\EE[X_{1i}^{W'w}X_{1j}^{W'w} ]=\EE[X_{1i}^{W'} X_{1j}^{W'}] \delta(u,w)\pi^w,$$
where $u=w_{k-j+i}$ is the  $(k-j+i)$th letter in $W$, and $\delta(u,w)$ is the Kronecker delta.
Thus
$$ \sum_{w\in [L]} \frac1{\pi^{w}} \EE[X_{1i}^{W'w}X_{1j}^{W'w}]=\sum_{w\in [L]}  \EE[X_{1i}^{W'}X_{1j}^{W'}]\delta(u,w)=  \EE[X_{1i}^{W'}X_{1j}^{W'}].
$$
Likewise, $ \sum_{w\in [L]} \frac1{\pi^{w}} \EE[X_{2i}^{W'w}X_{2j}^{W'w}]=  \EE[X_{2i}^{W'}X_{2j}^{W'}]
$.

In a similar way we see
$$X_{2i}^{W'w}X_{1j}^{W'w}=X_{2i}^{W'} X_{1j}^{W'}X_{2(i+k-1)}^wX_{1(j+k-1)}^w$$ 
and
$$\EE[ X_{2i}^{W'w}X_{1j}^{W'w}]=\EE[X_{2i}^{W'} X_{1j}^{W'}]M(u,w) \pi^w ,$$
where $u=w_{k-j+i}$ is the  $(k-j+i)$th letter in $W$, and $M$ is the Markov matrix describing the substitution process from sequence 1 to sequence 2.
Thus
$$\sum_{w\in [L]} \frac1{\pi^{w}} \EE[X_{2i}^{W'w}X_{1j}^{W'w}]=\sum_{w\in [L]}  \EE[X_{2i}^{W'}X_{1j}^{W'}]M(u,w)=  \EE[X_{2i}^{W'}X_{1j}^{W'}]$$
and, similarly, $ \sum_{w\in [L]} \frac1{\pi^{w}}\EE[ X_{1i}^{W'w}X_{2j}^{W'w}]
=\EE[X_{1i}^{W'} X_{2j}^{W'}].$

Combining these expected values with equation \eqref{eq:cov} we have
\begin{align*}
\sum_{W\in[L]^k} \frac 1{\pi^W} \Cov[Z_i^W,Z_j^W]&=\sum_{W'\in[L]^{k-1}}\frac 1{\pi^{W'}}  \EE[X_{1i}^{W'}X_{1j}^{W'}-X_{1i}^{W'}X_{2j}^{W'} -X_{2i}^{W'}X_{1j}^{W'}+X_{2i}^{W'}X_{2j}^{W'}]\\
&=\sum_{W'\in[L]^{k-1}}\frac 1{\pi^{W'}}  \Cov[Z_i^{W'},Z_j^{W'}] =0
\end{align*}
by the inductive hypothesis.
\end{proof}

\begin{proof}[Proof of Theorem \ref{thm:gentrace}]
For $k=1$, using Proposition \ref{prop:joint} we have
\begin{align*} \mathbb{E}\left[  \sum_w  \frac1{\pi^ w} (X_1^ w - X_{2}^{ w}) ^{2} \right]&=
  \sum_w {\pi^ w} \EE\left [ \left (\frac {X_1^ w}{\pi^w} - \frac{X_{2}^{ w}}{\pi^w}\right) ^{2}\right ] \\
&=\sum_w \pi^w \frac n{(\pi^w)^2}2(\pi^w -P_{ww})\\ 
&=\sum_w 2n(1 -M_{ww})\\ 
&=2n(L-\tr M).
\end{align*}

Now inductively suppose the result holds for $(k-1)$-mers, and consider $k$-mers.
Then, since  $Z^W$ has mean zero, 
\begin{align*}\EE\left[  \sum_W  \frac1{\pi^W} (X_{1}^{W} - X_{2}^{W}) ^{2} \right]  &=
  \sum_W  \frac1{\pi^W} \EE \left [(Z^W) ^{2} \right]  \\
  &= \sum_W  \frac1{\pi^W} \Var\left [Z^{W}\right ]  \\
  &= \sum_W  \frac1{\pi^W}  \Var \left [\sum_i Z_i^{W} \right]  \\
  &=\sum_W  \frac1{\pi^W} \left (  \sum_i \Var\left[Z_i^W\right ] + \sum_{i\ne j}\Cov\left [Z_i^W,Z_j^W\right ]    \right )\\
  &=\sum_W  \frac1{\pi^W} \left (  \sum_i \Var\left [Z_i^W\right]    \right ).
\end{align*}
Here Proposition \ref{prop:cov} justifies the last equality.
Now since $(Z_i^W)^2$ is the indicator variable for when exactly one of $X_{1i}^W$, $X_{2i}^W$ is 1, 
$$ \Var[Z_i^W] = \EE[(Z_1^W)^2]= \pi^W\left(1-\prod_{j=1}^kM(w_j,w_j) \right ) +  \pi^W\left(1-\prod_{j=1}^kN(w_j,w_j) \right ).$$
Thus
\begin{align*}
\mathbb{E}\left[  \sum_W  \frac1{\pi^W} (X_{1}^{W} - X_{2}^{W}) ^{2} \right]  & = (n-k+1)\sum_W \left ( 2- \prod_{j=1}^kM(w_j,w_j) -\prod_{j=1}^kN(w_j,w_j) \right )\\
&=(n-k+1)( 2L^k -(\tr M)^k -(\tr N)^k)\\
&=2(n-k+1)(L^k-(\tr M)^k ).
\end{align*}
\end{proof}


\section*{Appendix B: Proofs for Section \ref{sec:ident} }

We establish Proposition \ref{prop:joint}. Our proof is independent of earlier arguments, as the result is needed in Appendix A.

Sequences $S_1$ and $S_2$ each have i.i.d. sites with state probabilities given by $\pi_1$ and 
$\pi_2$, and site transition probabilities from $S_1$ to $S_2$ are given by the matrix $M$. 
Note that the $\pi_\ell$ need not be stationary vectors for $M$.

As in Appendix A, define random variables $X_{\ell k}^w$ for $w\in[L]$, $\ell\in\{1,2\}$, $j\in[n]$ to be 
indicators of state $w$ in sequence $\ell$ at site $j$.
The 1-mer distribution vector for the sequence $S_\ell$ is then
$X_\ell$ with entries $X_\ell^w=\sum_{j=1}^n X_{\ell j}^w.$ 

\begin{proof}[Proof of Proposition \ref{prop:joint}]
That $\pi_\ell =\frac 1n \EE(X_\ell )$ is clear.

Since $\EE\left [\frac {X_1^u} {\pi_1^u}-\frac {X_2^w  }{\pi_2^w}\right ]=0$,
\begin{align}
\EE\left [ \left (\frac {X_1^u} {\pi_1^u}-\frac {X_2^w  }{\pi_2^w}\right )^2\right ]&=\Var\left [ \frac {X_1^u} {\pi_1^u}-\frac {X_2^w  }{\pi_2^w} \right ]= \Var\left [ \sum_{j=1}^n \left (\frac {X_{1j}^u} {\pi_1^u}-\frac {X_{2j}^w  }{\pi_2^w} \right )\right ]\notag \\
&= \sum_{j=1}^n \Var \left [\frac {X_{1j}^u} {\pi_1^u}-\frac {X_{2j}^w  }{\pi_2^w} \right ] =n \cdot\Var \left [\frac {X_{11}^u} {\pi_1^u}-\frac {X_{21}^w  }{\pi_2^w} \right ]\label{eq:var1}
\end{align}
by the i.i.d.~assumption.
But 
\begin{align*}
\Var \left [\frac {X_{11}^u} {\pi_1^u}-\frac {X_{21}^w  }{\pi_2^w} \right ]&=\EE \left [\left (\frac {X_{11}^u} {\pi_1^u}-\frac {X_{21}^w  }{\pi_2^w} \right )^2\right ]=\EE \left [ \frac {(X_{11}^u)^2 }{(\pi_1^u)^2}+\frac {(X_{21}^w )^2 }{(\pi_2^w)^2} -2\frac {X_{11}^uX_{21}^w}{\pi_1^u\pi_2^w}\right ] \\
&=\EE \left [ \frac {X_{11}^u }{(\pi_1^u)^2}+\frac {X_{21}^w  }{(\pi_2^w)^2} -2\frac {X_{11}^uX_{21}^w}{\pi_1^u\pi_2^w}\right ].
\end{align*}
Since $\EE[X_{\ell 1}^u]=\pi_\ell ^u$ and $\EE [X_{11}^uX_{21}^w]=P_{uw}$, this shows
\begin{equation}
\Var \left [\frac {X_{11}^u} {\pi_1^u}-\frac {X_{21}^w  }{\pi_2^w} \right ]= \frac 1{\pi_1^u} +\frac 1{\pi_2^w}-2\frac {P_{uw}}{\pi_1^u\pi_2^w}.\label{eq:var2}\end{equation}
Substituting equation \eqref{eq:var2} into equation \eqref{eq:var1} and solving for $P_{uw}$ completes the proof.\end{proof}

\medskip
To establish Proposition \ref{prop:joint3}, recall that for $\ell=1,2,3$ we consider sequences $S_\ell$ with $1$-mer count vectors $X_\ell$ and base distribution vector $\pi_\ell=\EE(X_\ell)$. Let
$$Y_{uvw}=\alpha {X_1^u} +\beta X_2^v  +\gamma X_3^w,$$
where $\alpha, \beta,\gamma$ are constants chosen so
$$ \alpha{\pi_1^u}+ \beta{\pi_2^v}+  \gamma{\pi_3^w}=0.$$

\begin{proof}[Proof of Proposition \ref{prop:joint3}]
Note $\EE(Y_{uvw})=0$. Using the fact that the 3rd central moment is additive over independent variables, and that sites are identically distributed
\begin{align*}
\EE\left (Y_{uvw}^3 \right )&= \EE\left ( \left ( \sum_{i=1}^n \left (\alpha  {X_{1i}^u} +\beta  {X_{2i}^v  }+\gamma  {X_{3i}^w} \right )\right )^3 \right )\notag \\
&= \sum_{i=1}^n \EE\left ( \left ( \alpha  {X_{1i}^u} +\beta  {X_{2i}^v  }+\gamma  {X_{3i}^w}\right )^3 \right )=n \cdot\EE\left ( \left ( \alpha  {X_{11}^u} +\beta  {X_{21}^v  }+\gamma  {X_{31}^w} \right )^3 \right ),
\end{align*}
where $n$ is the sequence length.
But, since  $(X_{\ell 1}^u)^2=X_{\ell 1}^u$, 
\begin{align*}
\EE\left ( \left ( \alpha  {X_{1\ell}^u} +\beta  {X_{2\ell}^v }+\gamma  {X_{3\ell}^w} \right )^3 \right )
&= \EE  ( \alpha^3 {X_{11}^u }+\beta^3  {X_{21}^v  } +\gamma^3  {X_{31}^w }\\
& \ \ \ \ \ \  +3(\alpha^2 \beta+\alpha\beta^2)  {X_{11}^uX_{21}^v}+3(\alpha^2 \gamma+\alpha \gamma^2) {X_{11}^uX_{31}^w}\\& \ \  \ \ \ \   +3(\beta^2 \gamma +\beta\gamma^2) {X_{21}^vX_{31}^w}
+6\alpha\beta\gamma X_{11}^uX_{21}^vX_{31}^w)\\
&= \alpha^3 \pi_1^u +\beta^3 \pi_2^v +\gamma^3 \pi_3^w -3\alpha \beta (\alpha+\beta) \EE( {X_{11}^uX_{21}^v})\\
&\ \ \ \ \ \ \ -3\alpha \gamma (\alpha+\gamma) \EE( {X_{11}^uX_{31}^w})-3 \beta \gamma(\beta+\gamma) \EE( {X_{21}^vX_{31}^w}))\\
&\ \ \ \ \ \ \ +6\alpha\beta\gamma \EE(X_{11}^uX_{21}^vX_{31}^w).
\end{align*}

Using $\EE (X_{11}^uX_{21}^v)=P_{uv+}$ and variants, and $\EE(X_{11}^uX_{21}^vX_{31}^w)=P_{uvw}$,
this shows
\begin{align*}
\EE(Y_{uvw}^3 )&=n \big( \alpha^3 \pi_1^u +\beta^3 \pi_2^v+\gamma^3 \pi_3^w -3\alpha \beta (\alpha+\beta) P_{uv+} -3\alpha \gamma (\alpha+\gamma) P_{u+w}\\
&\ \ \ \ \ \ \ -3 \beta \gamma(\beta+\gamma)P_{+vw}+ 6\alpha\beta\gamma P_{uvw} \big),\end{align*}
and the claim readily follows.
\end{proof}

\end{document}